\DeclareFontFamily{U}{dutchcal}{\skewchar\font=45 }
\DeclareFontShape{U}{dutchcal}{m}{n}{<-> s*[1.0] dutchcal-r}{}
\DeclareFontShape{U}{dutchcal}{b}{n}{<-> s*[1.0] dutchcal-b}{}
\DeclareMathAlphabet{\mathlcal}{U}{dutchcal}{m}{n}
\SetMathAlphabet{\mathlcal}{bold}{U}{dutchcal}{b}{n}
\newcommand{\ess}{\mathlcal{s}}
\DeclarePairedDelimiter{\floor}{\lfloor}{\rfloor}
\newtheorem{theorem}{Theorem}[section]
\newtheorem{corollary}[theorem]{Corollary}
\newtheorem{lemma}[theorem]{Lemma}
\newtheorem{proposition}[theorem]{Proposition}
\newtheorem{remark}[theorem]{Remark}
\newcommand{\generate}[2]{\langle #1 \rangle_{\mathsf{#2}}}
\newcommand{\iu}{\mathrm{i}\mkern1mu}
\newcommand{\C}{\mathbb{C}}
\newcommand{\R}{\mathbb{R}}
\newcommand{\mf}[1]{\mathfrak{#1}}
\newcommand{\mc}[1]{\mathcal{#1}}
\newcommand{\diag}{\operatorname{diag}}
\newcommand{\conv}{\operatorname{conv}}
\newcommand{\SU}{\operatorname{SU}}
\newcommand{\SO}{\operatorname{SO}}
\newcommand{\U}{\operatorname{U}}
\renewcommand{\Im}{\operatorname{Im}}
\renewcommand{\Re}{\operatorname{Re}}
\newcommand{\Ad}{\operatorname{Ad}}
\newcommand{\ad}{\operatorname{ad}}
\newcommand{\tr}{\operatorname{tr}}
\newcommand{\id}{\mathds1}
\renewcommand{\epsilon}{\varepsilon}
\newcommand{\dmin}{{d_{\min}}}
\newcommand{\Uloc}{\U_{\mathrm{loc}}}
\newcommand{\uloc}{\mf u_{\mathrm{loc}}}
\begin{document}

\title[Optimal Control of Bipartite Systems]{Optimal Control of Bipartite Quantum Systems}

\author{Emanuel Malvetti {\small and} Léo Van Damme}

\address{School of Natural Sciences, Technische Universit\"at M\"unchen, 85737 Garching, Germany, and \\ Munich Center for Quantum Science and Technology (MCQST) \& Munich Quantum Valley (MQV)}

\begin{abstract}
Closed bipartite quantum systems subject to fast local unitary control are studied using quantum optimal control theory and a method of reduced control systems based on the Schmidt decomposition.
Particular focus is given to the time-optimal generation of maximally entangled states and product states, as well as to the problem of stabilizing quantum states with a certain amount of entanglement.
Explicit analytical solutions are given for general systems consisting of two qubits (as well as for bosonic and fermionic analogues) and also for a class of systems consisting of two coupled qutrits which is studied using the Pontryagin Maximum Principle.

\bigskip

\noindent\textbf{Keywords.} Bipartite entanglement, reduced control system, local unitary control, Schmidt decomposition, singular value decomposition, optimal control, Pontryagin Maximum Principle \medskip

\noindent\textbf{MSC Codes.} 
81Q93, 
15A18, 
81P42, 
49J15, 
49J24 

\end{abstract}

\maketitle

\section{Introduction}

Quantum entanglement is an important resource in pure-state quantum computation~\cite{Jozsa03,Vidal03,Animesh07}, quantum cryptography~\cite{BB84reprint,Gisin02}, and quantum sensing~\cite{Degen17}.
For the realization of such technologies it is necessary to be able to control the entanglement of a quantum system.
In particular the generation of sufficient entanglement as well as the stabilization of entangled states is essential.
In this paper we use quantum optimal control theory and a method of reduced control systems to address these challenges for closed bipartite quantum systems subject to fast local unitary control, focussing in particular on concrete low-dimensional systems.
 
The well-known Schmidt decomposition (equivalently the singular value decomposition) states that a pure quantum state of a bipartite system can always be transformed into a ``diagonal'' form using local unitary transformations.
The resulting diagonal elements are the singular values and they quantify the amount of entanglement between the two subsystems.
In a recent paper~\cite{BipartiteReduced24} we show that a closed bipartite quantum system with fast local unitary control can equivalently be described by a \emph{reduced control system} defined on the singular values of the state by factoring out the fast controllable degrees of freedom.
The general theory was established in~\cite{Reduced23} and applied to Markovian quantum systems with fast unitary control in~\cite{LindbladReduced23}.
Analogous results hold for bosonic and fermionic systems, where the Schmidt decomposition is to be replaced by the Autonne--Takagi and the Hua factorization respectively.

In this paper we apply optimal control theory to this reduced control system in some low dimensional cases.
This allows us, for instance, to find the optimal control sequence to prepare a maximally entangled state, or conversely to disentangle a given state.
Moreover, we address the problem of stabilizing states with a certain amount of entanglement.

\subsection{Outline}

Section~\ref{sec:control-systems} defines the full and the reduced control systems and recalls the equivalence between them.
The following sections use the reduced control system to derive time-optimal controls and stabilizing controls for the full control system in several concrete settings.
In particular, Section~\ref{sec:opt-ctrl-angle} treats those cases where the reduced state space is one-dimensional, for instance the case of two distinguishable qubits.
Finally, a higher-dimensional case of two distinguishable qutrits is studied in Section~\ref{sec:opt-ctrl-qutrits} using the Pontryagin Maximum Principle to derive time-optimal solutions for preparing maximally entangled states.

\section{Control Systems} \label{sec:control-systems}

We briefly recall the definitions of the control systems involved, focussing on the case of two distinguishable subsystems.
Further details and all relevant proofs, as well as the bosonic and fermionic cases, can be found in~\cite{BipartiteReduced24}.

\subsection{Full Control System} \label{sec:bilinear-system}

Let $\C^{d_1}$ and $\C^{d_2}$ be two finite dimensional Hilbert spaces of dimensions $d_1,d_2\geq2$.
The goal is to study the bilinear control system~\cite{Jurdjevic97,Elliott09} given by the following controlled Schrödinger equation:
\begin{equation} \label{eq:bilinear} \tag{\sf B}
\ket{\dot\psi(t)} = -\iu\Big(H_0 
+ \sum_{i=1}^{m_1} u_i(t) E_i\otimes\id 
+ \sum_{j=1}^{m_2} v_j(t) \id\otimes F_j
\Big)\ket{\psi(t)}\,,
\end{equation}
where $H_0\in\iu \mf{u}(d_1d_2)$ is the \emph{drift Hamiltonian} (or \emph{coupling Hamiltonian}), $E_i\in\iu\mf{u}(d_1)$ and $F_j\in\iu\mf{u}(d_2)$ are the \emph{(local) control Hamiltonians}, and $u_i$ and $v_j$ are the corresponding \emph{control functions}. 
We make the following key assumptions:
First, the control functions are locally integrable, in particular they may be unbounded.
Second, the control Hamiltonians generate the full local unitary Lie algebra\footnote{
It is possible to only consider the local special unitary Lie algebra. For details see~\cite[Rmk.~2.1, Sec.~2.4]{BipartiteReduced24}.
}:
$$\generate{\iu E_i\otimes\id,\id\otimes \iu F_j :\,i=1,\ldots,m_1,\,j=1,\ldots,m_2}{\mathsf{Lie}}=\uloc(d_1,d_2):=(\mf u(d_1)\otimes\id)\oplus(\id\otimes\,\mf u(d_2)).$$
Put simply, we have fast and full control over the group of local unitary transformations, which we denote $\Uloc(d_1,d_2):=\U(d_1)\otimes\U(d_2)$.

\subsection{Reduced Control System} \label{sec:reduced-system}


Due to the assumption of fast local unitary control, we can move arbitrarily quickly within the local unitary orbits of the system~\cite[Prop.~2.7]{Elliott09}.
The Schmidt decomposition (cf.~\cite[Sec.~9.2]{Bengtsson17})%
\footnote{This is equivalent to the complex singular value decomposition (see~\cite[Thm.~17.3]{Roman05}, a more basic exposition of the real case can be found in~\cite[Thm.~8.63]{Olver18}).}
states that using a local unitary transformation, a bipartite quantum state $\ket\psi\in\C^{d_1}\otimes\C^{d_2}$ can always be brought into the diagonal form $\sum_{i=1}^\dmin\sigma_i\ket i\otimes\ket i$, where $\sigma_i\in\R$ are the singular values of $\ket\psi$ (or rather the corresponding matrix) and $\dmin=\min(d_1,d_2)$.
The subspace of all such diagonal states is denoted $\Sigma\subset\C^{d_1}\otimes\C^{d_2}$.
Moreover, the singular values are well-defined up to order and signs, and two states have the same singular values if and only if they can be transformed into each other using local unitary transformations.
In particular, within the full control system~\eqref{eq:bilinear} two states are effectively equivalent if and only if they have the same singular values $\sigma_i$ (up to order and sign).
This strongly suggests that there should exist a ``reduced'' control system defined on the singular values.
This is indeed the case, as was shown in greater generality in~\cite{Reduced23} and for the present setting in~\cite{BipartiteReduced24}.

The vector $\sigma=(\sigma_i)_{i=1}^\dmin\in\R^\dmin$ containing the singular values of the quantum state will be the state of the reduced control system.
Due to the normalization of the state $\ket\psi$, it always holds that $\sum_{i=1}^\dmin\sigma_i^2=1$.
Hence the reduced state space is the so-called \emph{Schmidt sphere} $S^{\dmin-1}$.
We will see below that the dynamics on this sphere are given by rotations.
Moreover, the non-uniqueness of the singular values introduces a symmetry on the reduced state space.
Indeed, the \emph{Weyl group} acts on the space of singular values by permutations and sign flips, and partitions the space into a number of equivalent regions. 
By fixing the order and signs of the singular values (e.g.\ non-increasing order and non-negative signs) one can choose one of these regions, called the \emph{Weyl chamber}, to remove this ambiguity.


In order to switch between the set of bipartite quantum states and the Schmidt sphere we need the following maps:
The map $\diag:\R^\dmin\to\Sigma$ sends a vector of singular values to the corresponding diagonal quantum state via $\sigma\mapsto\ket{\sigma}:=\sum_{i=1}^\dmin\sigma_i\ket i\otimes\ket i$.
Conversely, the map $\Pi_\Sigma:\C^{d_1}\otimes\C^{d_2}\to\R^\dmin$ sends a state to its real diagonal via $\ket\psi\mapsto(\Re(\braket{ii|\psi}))_{i=1}^\dmin$.


To define the reduced control system on the Schmidt sphere $S^{\dmin-1}$, we have to understand the dynamics of the singular values of the state induced by the coupling Hamiltonian $H_0$. 
For this, consider a regular\footnote{Here \emph{regular} means that all singular values are distinct and non-zero.} and differentiable path of states $\ket{\psi(t)}=V(t)\otimes W(t)\ket{\sigma(t)}$ which is a solution to the full control system~\eqref{eq:bilinear}. 
Then it holds that $\dot\sigma(t)=-H_{V\otimes W}\sigma(t)$ where
$$
H_{V\otimes W} := \Pi_\Sigma\circ(V\otimes W)^* \iu H_0 (V\otimes W)\circ\diag
$$
is called an \emph{induced vector field} on the Schmidt sphere.
Here $\circ$ denotes function composition.
Indeed, it is easy to show that this vector field generates a rotation of the Schmidt sphere.
The collection of induced vector fields is denoted $\mf H := \{-H_U:U\in\Uloc(d_1,d_2)\}$.
These vector fields contain the possible dynamics of the reduced control system.
The following result gives a more concrete expression:

\begin{proposition} \label{prop:induced-vf}
Let $H_0=\sum_{k=1}^r A_k\otimes B_k$ denote an arbitrary coupling Hamiltonian. 
Then, for a local unitary $V\otimes W$, the induced vector field $-H_{V\otimes W}$ on $S^{\dmin}$ takes the form
$$
H_{V\otimes W}=\sum_{k=1}^m \Im(V^*A_kV\circ W^*B_kW).
$$
Here $\circ$ denotes the element-wise product (if $d_1\neq d_2$ the resulting matrix will have size $\dmin\times\dmin$).
\end{proposition}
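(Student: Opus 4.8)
The plan is to exploit the fact that $H_{V\otimes W}$ is, by construction, a linear endomorphism of $\R^{\dmin}$, so it is completely determined by its matrix in the standard basis $\{e_i\}_{i=1}^{\dmin}$. The whole statement then reduces to computing the $(j,i)$ entry of $H_{V\otimes W}$ and matching it against the $(j,i)$ entry of $\sum_k \Im(V^*A_kV\circ W^*B_kW)$, where I write $\circ$ for the Hadamard product. First I would simply push the basis vector $e_i$ through the chain of maps defining $H_{V\otimes W}$: applying $\diag$ gives $\ket i\otimes\ket i=:\ket{ii}$, applying $V\otimes W$ gives $(V\ket i)\otimes(W\ket i)$, then $\iu H_0$ and the adjoint $(V\otimes W)^*$ act, and finally $\Pi_\Sigma$ extracts the real diagonal. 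Reading off the $j$-th coordinate yields the key identity
$$
(H_{V\otimes W})_{ji} = \Re\big(\iu\,\bra{jj}(V\otimes W)^* H_0 (V\otimes W)\ket{ii}\big);
$$
everything after this is elementary algebra.

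Next I would substitute $H_0=\sum_k A_k\otimes B_k$ and use that the conjugated operator factors across the tensor product, so that
$$
\bra{jj}(V^*\otimes W^*) H_0 (V\otimes W)\ket{ii} = \sum_k (V^*A_kV)_{ji}\,(W^*B_kW)_{ji}.
$$
The product $(V^*A_kV)_{ji}(W^*B_kW)_{ji}$ is by definition the $(j,i)$ entry of the Hadamard product $V^*A_kV\circ W^*B_kW$. Combining this with the previous display and using that multiplication by $\iu$ followed by taking the real part is exactly (minus) the imaginary part, i.e.\ $\Re(\iu z)=-\Im(z)$, converts the real part into the imaginary part of the element-wise products, which gives the claimed formula entrywise and hence as an equality of matrices. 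For $d_1\neq d_2$ one need only observe that the indices $i,j$ range over $1,\dots,\dmin$, so the element-wise products are taken over the leading $\dmin\times\dmin$ blocks, as asserted.

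There is no deep obstacle here; the result is essentially a bookkeeping exercise. The only genuine subtleties are keeping the two meanings of $\circ$ apart (function composition in the definition versus Hadamard product in the statement) and correctly combining the factor $\iu$ with the real part taken by $\Pi_\Sigma$, which is precisely what accounts for the appearance of the imaginary part and fixes the overall sign convention. As a consistency check I would verify that the resulting matrix is real and antisymmetric: since $H_0$ is Hermitian, so is its conjugate $(V\otimes W)^*H_0(V\otimes W)$, whence $\bra{jj}(V\otimes W)^*H_0(V\otimes W)\ket{ii}=\overline{\bra{ii}(V\otimes W)^*H_0(V\otimes W)\ket{jj}}$. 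Taking imaginary parts forces the $(j,i)$ and $(i,j)$ entries to be negatives of each other and annihilates the diagonal, confirming the earlier assertion that $-H_{V\otimes W}$ generates a rotation of the Schmidt sphere.
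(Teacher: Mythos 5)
Your approach is the right one, and in fact the only one available: this paper states Proposition~\ref{prop:induced-vf} without proof (all proofs are deferred to~\cite{BipartiteReduced24}, where the argument is a direct computation of precisely this kind), so unwinding the definition of $H_{V\otimes W}$ on the standard basis vectors is exactly what is called for. Your key identity $(H_{V\otimes W})_{ji}=\Re\big(\iu\bra{jj}(V\otimes W)^*H_0(V\otimes W)\ket{ii}\big)$, the tensor factorization $\bra{jj}(V^*\otimes W^*)H_0(V\otimes W)\ket{ii}=\sum_k(V^*A_kV)_{ji}(W^*B_kW)_{ji}$, the identification of these products with entries of the Hadamard product, the remark about the leading $\dmin\times\dmin$ blocks when $d_1\neq d_2$, and the closing antisymmetry check are all correct.

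The one step you cannot leave as written is the sign. By your own (correct) identity $\Re(\iu z)=-\Im(z)$, the computation yields
\[
H_{V\otimes W}=-\sum_{k=1}^{r}\Im\big(V^*A_kV\circ W^*B_kW\big),
\]
the \emph{negative} of the displayed equation, and ``fixes the overall sign convention'' is not an argument. What your computation actually proves is $-H_{V\otimes W}=\sum_{k}\Im(V^*A_kV\circ W^*B_kW)$, and that is the correct statement: the prose of the proposition attributes the formula to the induced vector field $-H_{V\otimes W}$ while the display attaches it to $H_{V\otimes W}$, an inconsistency in the statement itself rather than in your algebra. It is also the version forced by the dynamics: for a solution $\ket{\psi(t)}=V\otimes W\ket{\sigma(t)}$ of the full system one has $\dot\sigma=-H_{V\otimes W}\sigma$, and expanding the Schr\"odinger equation shows that the generator of this motion is $\Im(M)$ with $M_{ji}=\bra{jj}(V\otimes W)^*H_0(V\otimes W)\ket{ii}$ (the local anti-Hermitian terms drop out because their diagonal compressions are purely imaginary). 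So you should state explicitly that you have proven the formula for $-H_{V\otimes W}$ and flag the spurious sign in the display. The discrepancy is harmless downstream --- for instance in Proposition~\ref{prop:two-qubits-speed-limit} only the maximal $|\omega(H_U)|$ enters, and the set of achievable angular velocities is symmetric about zero --- but a proof must commit to the convention it actually derives.
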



With these definitions we can finally define the reduced control system as follows. Let $I$ be an interval of the form $[0,T]$ for $T\geq0$ or $[0,\infty)$. Then we say that $\sigma:I\to\R^\dmin$ is a solution to the reduced control system
\begin{equation} \label{eq:reduced} \tag{\sf R}
\dot\sigma(t)=-H_U\sigma(t), \quad \sigma(0)=\sigma_0\in S^{\dmin-1}
\end{equation}
if it is an absolutely continuous path 
satisfying~\eqref{eq:reduced} almost everywhere for some measurable control function $U:I\to\Uloc(d_1,d_2)$.
Note that such solutions always remain on the sphere $S^{\dmin-1}$.

\begin{remark} \label{rmk:different-reduced}
There are a few slightly different ways of defining the reduced control system which are given in~\cite{Reduced23}. 
The most intuitive, which is given here, is to consider the control system $\dot{\sigma}(t)=-H_{U(t)}(\sigma(t))$.
This is exactly equivalent to the differential inclusion $\dot\sigma(t)\in\mf H\sigma$, which is itself approximately equivalent to the ``relaxed'' differential inclusion $\dot\sigma(t)\in\conv(\mf H\sigma)$ where $\conv$ denotes the convex hull.
\end{remark}


The central result is the Equivalence Theorem~\cite[Thm.~2.10]{BipartiteReduced24} which states the precise sense in which the full and reduced control systems are equivalent. 
Informally, for every solution $\ket{\psi(t)}$ of the full control system, the corresponding singular values $\sigma(t)$ (when chosen non-negative and ordered non-increasingly) yield a solution to the reduced control system. 
Conversely, every solution to the reduced control system can be lifted approximately with arbitrarily small error to the full control system.


Importantly, under some assumptions, there is an exact and constructive way to compute such a lifted solution.
The relevant concept is that of the \emph{compensating Hamiltonian} $H_c$ for the state $\ket{\psi}$ defined by
$H_c=\ad_\psi^+(iH_0\ket\psi)$ where $\ad_\psi(iH)=-iH\ket\psi$ and $(\cdot)^+$ denotes the Moore--Penrose pseudo-inverse.
The compensating Hamiltonian $H_c$ exactly cancels out the local unitary drift induced by the coupling term $H_0$ at the state $\ket\psi$.
Concrete expressions for $(\ad_\psi^d)^+$ are given in~\cite[App.~A.1]{BipartiteReduced24}.
It plays an important role in lifting solutions~\cite[Prop.~2.13]{BipartiteReduced24} as well as stabilization~\cite[Prop.~3.5]{BipartiteReduced24} as we will see in the following sections.

\section{Optimal Control of the Schmidt Angle}  \label{sec:opt-ctrl-angle}

Let us now turn to the question of optimal control.
Given two quantum states, the challenge is to find a solution connecting them in the least amount of time possible.
Additionally we want to be able to stabilize states with a desired amount of entanglement. 

In this section we consider the cases in which the reduced state space is one-dimensional. 
In these cases the state can be described by a single value, called the Schmidt angle $\chi$. 
We will start with the simplest non-trivial setting of two coupled qubits in Section~\ref{sec:two-qubits} before treating the bosonic case in Section~\ref{sec:bosonic-qubits} and the fermionic case in Section~\ref{sec:fermionic-qu4its}. 
In each section we first derive the exact speed limit of the reduced control system, and then turn to the full control system to compute the corresponding optimal controls and to show how states with a prescribed set of singular values can be stabilized.

\subsection{Two distinguishable qubits} \label{sec:two-qubits}


First we treat the case of two distinguishable qubits with an arbitrary coupling Hamiltonian $H_0$.
Using local unitary control, the state $\ket\psi\in\C^2\otimes\C^2$ can always be brought into diagonal form $\ket\psi=\sigma_1\ket{11}+\sigma_2\ket{22}$, where $\sigma_1,\sigma_2\in\R$ are the singular values of $\ket\psi$.
Due to the normalization of the state $\ket\psi$, it holds that $\sigma_1^2+\sigma_2^2=1$ and hence there exists an angle $\chi$ such that $\sigma_1=\cos(\chi)$ and $\sigma_2=\sin(\chi)$.
Commonly $\chi$ is called the \emph{Schmidt angle}.
Since the singular values are only defined up to sign and order, it suffices to consider states in the region $\chi\in[0,\pi/4]$ (called the Weyl chamber), where $0$ represents product states and $\pi/4$ represents maximally entangled states.
Since the controls $-H_U$ of the reduced control system on this circle are the generators of rotations, i.e.\ $H_U\in\mf{so}(2,\R)$, they are described by their angular velocity $\omega(H_U)=(H_U)_{21}\in\R$. 
One can show that the set of achievable angular velocities is a closed interval symmetric around $0$.
Hence the optimal control task boils down to finding the largest achievable angular velocity of the reduced control system, which we will denote $\omega^\star(H_0)$ or simply $\omega^\star$.
These results yield quantum speed limits on the evolution of the singular values (and hence the amount of entanglement) for the corresponding systems.

\smallskip Using the standard Pauli basis of the $2\times2$ Hermitian matrices
\begin{equation*} \label{eq:paulis}
P_0=\begin{pmatrix}1&0\\0&1\end{pmatrix},\quad
P_x=\begin{pmatrix}0&1\\1&0\end{pmatrix},\quad
P_y=\begin{pmatrix}0&-\iu\\\iu&0\end{pmatrix},\quad
P_z=\begin{pmatrix}1&0\\0&-1\end{pmatrix},
\end{equation*}
and after removing the local part (cf.\ Remark~\ref{rmk:local}) of the coupling Hamiltonian $H_0$, it can, by Lemma~\ref{lemma:normal-form}, be uniquely written as
$$
H_0=\textstyle\sum_{i,j=1}^3 C_{ij}\, P_i\otimes P_j 
$$
where $C=C(H_0)\in\R^{3,3}$ is the coefficient matrix (with the indices $1,2,3$ corresponding to $x,y,z$ in that order).

\begin{lemma} \label{lemma:local-rot-trafo}
Under a local unitary basis transformation $U=V\otimes W$ the coefficient matrix transforms as
\begin{equation*}
C(U^*H_0U) = R_V^\top C(H_0)R_W.
\end{equation*}
\end{lemma}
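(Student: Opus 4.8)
The plan is to expand the conjugated Hamiltonian directly in the Pauli basis and reduce everything to the well-known correspondence between conjugation by a single-qubit unitary and rotation of the Pauli three-vector. Since $U=V\otimes W$ factorizes, bilinearity of the tensor product gives immediately
\[
U^* H_0 U = \sum_{i,j=1}^3 C_{ij}\,(V^* P_i V)\otimes(W^* P_j W),
\]
so the whole computation is controlled by how conjugation acts on each factor $V^* P_i V$ and $W^* P_j W$ separately.

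The key ingredient I would isolate first is the following standard fact: for any $V\in\U(2)$ the conjugation map $A\mapsto V^* A V$ preserves the three-dimensional real space $\linspan\{P_x,P_y,P_z\}$ and acts on it as a rotation. Concretely, writing $R_V\in\SO(3)$ for the rotation associated to $V$ via the standard two-to-one homomorphism $\SU(2)\to\SO(3)$ (characterized by $V P_i V^* = \sum_a (R_V)_{ai}P_a$ and extended to all of $\U(2)$ because an overall phase cancels in the conjugation), one has $V^* P_i V = \sum_{a=1}^3 (R_V)_{ia}P_a$. I would justify this in three short steps: (i) $V^* P_i V$ is Hermitian and traceless, hence lies in $\linspan\{P_x,P_y,P_z\}$, so no $P_0$-component (local part) is ever produced; (ii) the map preserves the Hilbert--Schmidt inner product $\langle A,B\rangle=\tfrac12\Tr(A^*B)$, for which $\{P_x,P_y,P_z\}$ is orthonormal, so its matrix is orthogonal; (iii) the determinant is $+1$ by continuity and connectedness (equivalently because $V\mapsto R_V$ is a group homomorphism with $R_\id=\id$), while the transpose appears because $V^*=V^{-1}$ gives $R_{V^*}=R_V^{-1}=R_V^\top$.

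With this in hand the remainder is pure index bookkeeping. Substituting $V^* P_i V=\sum_a (R_V)_{ia}P_a$ and $W^* P_j W=\sum_b (R_W)_{jb}P_b$ and collecting the coefficient of $P_a\otimes P_b$ yields
\[
U^* H_0 U = \sum_{a,b=1}^3\Big(\sum_{i,j=1}^3 (R_V)_{ia}\,C_{ij}\,(R_W)_{jb}\Big)P_a\otimes P_b,
\]
so the bracketed coefficient is exactly $(R_V^\top C R_W)_{ab}$. Because step (i) guarantees that no local term is generated, the result is again in the normal form of Lemma~\ref{lemma:normal-form}; invoking the uniqueness of the coefficient matrix asserted there, we conclude $C(U^*H_0U)=R_V^\top C(H_0)R_W$. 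The only genuine subtlety, and the place where I would be most careful, is fixing the index convention for $R_V$ so that the transpose lands on the left-hand factor rather than the right; everything else is routine linear algebra.
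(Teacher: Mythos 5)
Your proof is correct and takes essentially the same route as the paper's: expand $U^*H_0U=\sum_{i,j}C_{ij}(V^*P_iV)\otimes(W^*P_jW)$, convert each conjugated Pauli factor into a rotation via the $\SU(2)\to\SO(3)$ adjoint correspondence, and collect indices to obtain $R_V^\top C R_W$. The only difference is that you prove the rotation correspondence from first principles (tracelessness, Hilbert--Schmidt orthogonality, connectedness) and explicitly invoke the uniqueness of the coefficient matrix from Lemma~\ref{lemma:normal-form}, whereas the paper cites the standard result and leaves uniqueness implicit; both are sound.
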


\begin{proof}
Recall that applying unitary basis transformations to $\mf{su}(2)$ corresponds to three-dimensional rotations.
For a given $U\in\SU(2)$ we write $R_U$ for the resulting rotation in $\SO(3)$. 
More explicitly, for $\vec a\in\R^3$ and $\vec P=(P_x,P_y,P_z)$, we have the relation $\Ad_U(\vec a\cdot\vec P)=(R_U\vec a)\vec P$ where $(R_U)_{i,j=1}^3=\tfrac12\tr(P_i U P_j U^{-1})$, see~\cite[Sec.~3.5, Thm.~I]{Cornwell84v1}.
We see that for $U=V\otimes W$ we obtain
$U^*H_0U
=\sum_{i,j=1}^3V^*P_iV\otimes W^*P_jW
=\sum_{i,j,k,l=1}^3C_{ij}(R_V^\top)_{ki}P_k\otimes (R_W^\top)_{lj}P_l
=\sum_{k,l=1}^3(R_V^\top CR_W)_{kl}P_k\otimes P_l$,
as desired.
\end{proof}
\noindent To avoid confusion, note that here we applied local unitary transformations to the physical state space, whereas in Appendix~\ref{app:ham-decs} we apply basis transformations to the space of Hermitian operators.


\smallskip With this we can derive the exact quantum speed limit $\omega^\star$ for the evolution of the singular values of the state.

\begin{proposition} \label{prop:two-qubits-speed-limit}
Let $\ess_i$ for $i=1,2,3$ denote the singular values of $C$ in non-increasing order (and chosen to be non-negative). 
Then $\omega^\star=\ess_1+\ess_2$. 
\end{proposition}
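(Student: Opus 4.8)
The plan is to first turn the abstract definition of $\omega^\star$ into a concrete optimization over the rotation group, and then solve that optimization via the singular value decomposition of $C$. Starting from Proposition~\ref{prop:induced-vf} with the Pauli expansion $H_0 = \sum_{i,j} C_{ij} P_i \otimes P_j$, linearity gives $H_{V\otimes W} = \sum_{i,j} C_{ij}\Im(V^* P_i V \circ W^* P_j W)$, so that the angular velocity is $\omega = (H_{V\otimes W})_{21} = \sum_{i,j} C_{ij}\Im[(V^*P_iV)_{21}(W^*P_jW)_{21}]$. Using Lemma~\ref{lemma:local-rot-trafo} in the form $V^*P_iV = \sum_k (R_V^\top)_{ki}P_k$, together with the entries $(P_x)_{21}=1$, $(P_y)_{21}=\iu$, $(P_z)_{21}=0$, I would compute $(V^*P_iV)_{21} = (R_V^\top)_{1i} + \iu (R_V^\top)_{2i}$ and similarly for $W$. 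Writing $\vec r_1,\vec r_2$ for the first two rows of $R_V^\top$ and $\vec s_1,\vec s_2$ for those of $R_W^\top$, the imaginary part of the product collapses to the clean bilinear form
\[
\omega = \vec r_1^\top C\, \vec s_2 + \vec r_2^\top C\, \vec s_1 .
\]
Since the cover $\SU(2)\to\SO(3)$ is onto and any orthonormal pair extends to an element of $\SO(3)$, computing $\omega^\star$ amounts to maximizing this expression over all orthonormal pairs $\{\vec r_1,\vec r_2\}$ and $\{\vec s_1,\vec s_2\}$ in $\R^3$.

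For the optimization I would use the SVD $C = U_C\,\diag(\ess_1,\ess_2,\ess_3)\,V_C^\top$ and substitute $\vec a_i = U_C^\top\vec r_i$, $\vec b_j = V_C^\top\vec s_j$, which remain orthonormal pairs, turning the objective into $\omega = \sum_k \ess_k(a_{1k}b_{2k}+a_{2k}b_{1k})$. Grouping the $k$-th components into column vectors $\vec\alpha_k = (a_{1k},a_{2k})^\top$ and $\vec\beta_k=(b_{1k},b_{2k})^\top$, the row-orthonormality becomes the tight-frame identity $\sum_k \vec\alpha_k\vec\alpha_k^\top = I_2$, which forces $\sum_k|\vec\alpha_k|^2 = 2$ and $|\vec\alpha_k|^2\le1$ for every $k$ (and likewise for $\vec\beta_k$). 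Each summand satisfies $\vec\alpha_k^\top P_x\vec\beta_k \le |\vec\alpha_k||\vec\beta_k| \le \tfrac12(|\vec\alpha_k|^2+|\vec\beta_k|^2)$ by Cauchy--Schwarz and AM--GM, so $\omega \le \tfrac12\sum_k \ess_k(|\vec\alpha_k|^2+|\vec\beta_k|^2)$. Maximizing the linear functional $\sum_k \ess_k x_k$ over $x_k\in[0,1]$ with $\sum_k x_k = 2$ puts all weight on the two largest singular values, giving the upper bound $\omega\le \ess_1+\ess_2$.

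It remains to show the bound is attained, which I regard as the crux: one must exhibit a single choice of frames that saturates Cauchy--Schwarz, AM--GM, and the linear program simultaneously. The natural candidate is $\vec a_1=\vec b_2=\hat e_1$ and $\vec a_2=\vec b_1=\hat e_2$ (so $\vec\alpha_1=\vec\beta_2=(1,0)^\top$, $\vec\alpha_2=\vec\beta_1=(0,1)^\top$, and $\vec\alpha_3=\vec\beta_3=0$); a direct check gives $\omega=\ess_1+\ess_2$ and verifies $P_x\vec\beta_k=\vec\alpha_k$ for $k=1,2$, so all three inequalities become equalities. Translating back via $\vec r_i = U_C\vec a_i$, $\vec s_j = V_C\vec b_j$, completing each orthonormal pair to a proper rotation (choosing the sign of the third row to make the determinant $+1$, which does not affect $\omega$), and lifting along the double cover yields explicit local unitaries $V\otimes W$ realizing the maximal angular velocity. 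The main obstacle is thus the bookkeeping confirming that the three tightness conditions are mutually compatible at one frame; the upper bound itself is comparatively routine once the frame constraints $|\vec\alpha_k|^2\le1$ and $\sum_k|\vec\alpha_k|^2=2$ have been isolated.
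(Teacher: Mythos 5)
Your proof is correct, and its overall skeleton matches the paper's: both arguments reduce the problem to maximizing the quantity $(R_V^\top C R_W)_{12}+(R_V^\top C R_W)_{21}$ over pairs of rotations (your bilinear form $\vec r_1^{\,\top} C\vec s_2+\vec r_2^{\,\top} C\vec s_1$ is exactly this), and both saturate the bound with the same SVD-adapted configuration --- your frame choice $\vec a_1=\vec b_2=\hat e_1$, $\vec a_2=\vec b_1=\hat e_2$ reproduces, in matrix form, the paper's witness $R_V^\top C R_W$ having $\ess_1,\ess_2$ in the two off-diagonal slots of the upper $2\times2$ block and $\ess_3$ in the corner. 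The genuine divergence is in how the upper bound $\omega\le\ess_1+\ess_2$ is established: the paper obtains it in one line by invoking Kostant's convexity theorem for the real singular value decomposition, whereas you prove it from first principles --- SVD substitution, the tight-frame identity $\sum_k\vec\alpha_k\vec\alpha_k^\top=I_2$ (yielding $|\vec\alpha_k|^2\le1$ and $\sum_k|\vec\alpha_k|^2=2$), Cauchy--Schwarz, AM--GM, and an elementary linear program over the box constraints. Your route is longer but self-contained, replacing a deep cited theorem with a transparent computation; it also makes explicit two points the paper passes over silently, namely that maximizing over $\Uloc(2,2)$ is equivalent to maximizing over orthonormal $2$-frames in $\R^3$ (completion via the cross product plus surjectivity of $\SU(2)\to\SO(3)$), and that the determinant bookkeeping needed to stay inside $\SO(3)$ is harmless because $\omega$ does not see the third row. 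What the paper's appeal to Kostant buys is brevity and a hint at how the bound generalizes beyond the $2\times2$ case; what your argument buys is elementary verifiability, including a clean certificate that Cauchy--Schwarz, AM--GM, and the linear program are tight simultaneously at the optimizing frame.
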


\begin{proof}
First note that for $U=V\otimes W\in\Uloc(d)$ and coupling Hamiltonian $H_0$, if we define $\tilde H_0=U^*H_0U$, then $H_U=\tilde H_{\id}$.
By Lemma~\ref{lemma:local-rot-trafo} it holds that the coefficient matrix of $\tilde H_0$ in the Pauli basis is $C(\tilde H_0)=C(U^*H_0U) = R_V^\top CR_W$.
Using Proposition~\ref{prop:induced-vf} and Kostant's convexity theorem (cf.~\cite{Kostant73}) for the real singular value decomposition we find that $\omega(H_U)=(\tilde H_{\id})_{12}=(R_V^\top CR_W)_{12}+(R_V^\top CR_W)_{21}\leq\ess_1+\ess_2$. 
Moreover, using the real singular value decomposition it is clear that there exist unitaries $V,W$, and hence rotation matrices $R_V,R_W$, such that
$$
R_V^\top CR_W
=
\begin{pmatrix}
0&\ess_1&0\\
\ess_2&0&0\\
0&0&\ess_3
\end{pmatrix},
$$
and hence the bound is tight.
\end{proof}


\smallskip The next step is to compute time-optimal (and later stabilizing) controls in the full bilinear system.
Let $U=V\otimes W$ be a local unitary achieving the speed limit 
as in the proof of Proposition~\ref{prop:two-qubits-speed-limit}.
Then the path 
$$
\sigma:[0,T]\to S^1, \quad t \mapsto (\cos(\omega^\star t),\sin(\omega^\star t))
$$
is a solution to the reduced control system $\dot\sigma=-H_U\sigma$ with $\sigma(0)=(1,0)$ which reaches the quantum speed-limit of Proposition~\ref{prop:two-qubits-speed-limit}.

Using~\cite[Prop.~2.13]{BipartiteReduced24} one can now determine a lifted solution and the corresponding control functions with possible divergences at the non-regular points $0$ and $\frac\pi4$ corresponding to product states and maximally entangled states respectively. 
Indeed the control Hamiltonian is\footnote{The direct term vanishes since the optimal control unitary is constant in time.}
$$
H_c=E\otimes\id+\id\otimes F
=-\iu \Ad_U \circ (\ad_{\sigma(t)}^d)^{-1}\circ\Pi_\Sigma^\perp(\iu\tilde H_0\ket{\sigma(t)})
$$
where $\Ad_U(\cdot)=U\cdot U^*$ and $\tilde H_0=U^*H_0U$.
Moreover $(\ad_\sigma^d)^{-1}$ and $\Pi_\Sigma^\perp$ are given in~\cite[Lem.~A.7 \& p.~9]{BipartiteReduced24}.
Explicitly, denoting $C'=C(\tilde H_0)$, for $\chi\neq k\tfrac\pi4$ we obtain that%
\footnote{
We use the notation $\sec(\chi)=\frac1{\cos(\chi)}$ for the secant and $\csc(\chi)=\frac1{\sin(\chi)}$ for the cosecant.
}
\begin{align} \label{eq:comp-dist} 
\begin{split}
V^*EV&=-\Big(\begin{smallmatrix}
\frac12 (C'_{zz} + (C'_{xx} - C'_{yy}) \tan(\chi)) & \sec(2\chi) (C'_{xz} - \iu C'_{yz} - (C'_{zx} + \iu C'_{zy}) \sin(2\chi)) \\
\sec(2\chi) (C'_{xz} + \iu C'_{yz} - (C'_{zx} - \iu C'_{zy}) \sin(2 \chi)) & \frac12 (C'_{zz} + (C'_{xx} - C'_{yy}) \cot(\chi))
\end{smallmatrix}\Big) \\
W^*FW&=-\Big(\begin{smallmatrix}
\frac12(C'_{zz} + (C'_{xx} - C'_{yy}) \tan(\chi)) & \sec(2\chi) (C'_{zx} - \iu C'_{zy} - (C'_{xz} + \iu C'_{yz}) \sin(2\chi)) \\
\sec(2\chi) (C'_{zx} + \iu C'_{zy} - (C'_{xz} - \iu C'_{yz}) \sin(2 \chi)) & \frac12 (C'_{zz} + (C'_{xx} - C'_{yy}) \cot(\chi))
\end{smallmatrix}\Big).
\end{split}
\end{align}
In our case $C'_{xy}=\ess_1$, $C'_{yx}=\ess_2$ and $C'_{zz}=\ess_3$ and the remaining matrix entries vanish. 
Hence this becomes $E=F=-\tfrac{\ess_3}2\id$, which is independent of the state $\chi$ and thus also of time.
This proves the following result, where we assume for simplicity that the control Hamiltonians linearly span $\uloc(2,2)$.

\begin{proposition}
For a system composed of two qubits initially in the state $\ket{00}$, the following sequence yields a maximally entangled state and does so in minimal time%
\footnote{Mathematically, even with unbounded controls, the optimal time can only be reached approximately.
Hence one should more accurately speak of infimum time.}:
\begin{enumerate}[1.]
\item Apply the local unitary $U$ (almost) instantaneously.
\item \label{it:ctrl-dist} Apply the constant control Hamiltonian $H_c=-\ess_3\id\otimes\id$ for time $\tfrac{\pi/4}{\ess_1+\ess_2}$.
\item (Optionally) apply a local unitary to choose the desired maximally entangled state.
\item (Optionally) stabilize the final state with an appropriate local compensating Hamiltonian.
\end{enumerate}
The reverse direction is analogous.
\end{proposition}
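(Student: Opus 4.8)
The plan is to split the claim into a lower bound (no protocol can join the two states faster than time $\tfrac{\pi/4}{\ess_1+\ess_2}$) and a matching construction, and to bridge the two using the Equivalence Theorem together with the speed limit of Proposition~\ref{prop:two-qubits-speed-limit}. First I would record the dictionary between the endpoints: the product state $\ket{00}$ has singular values $(1,0)$, i.e.\ Schmidt angle $\chi=0$, while a maximally entangled state corresponds to $\chi=\pi/4$. For the lower bound I would invoke the Equivalence Theorem, so that any solution of~\eqref{eq:bilinear} joining these endpoints in time $T$ projects to an absolutely continuous reduced solution with $\chi(0)=0$ and $\chi(T)=\pi/4$. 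Since on $S^1$ the reduced control acts as a rotation with angular velocity bounded in modulus by $\omega^\star=\ess_1+\ess_2$, we get $\pi/4=\int_0^T\dot\chi\,dt\le\omega^\star T$ and hence $T\ge\tfrac{\pi/4}{\ess_1+\ess_2}$.

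For the construction I would work in the rotated frame fixed by the $U=V\otimes W$ of Proposition~\ref{prop:two-qubits-speed-limit}, in which $\tilde H_0=U^*H_0U$ has coefficient matrix with only $C'_{xy}=\ess_1$, $C'_{yx}=\ess_2$, $C'_{zz}=\ess_3$ nonzero. By the computation following~\eqref{eq:comp-dist} the compensating control collapses to the state-independent $H_c=-\ess_3\,\id\otimes\id$. The crux is then a short direct verification that the diagonal family $\ket{\sigma(t)}=\cos(\chi)\ket{00}+\sin(\chi)\ket{11}$ is preserved by $\tilde H_0+H_c$ and sweeps $\chi$ at the maximal rate: one checks $\tilde H_0\ket{\sigma}=\iu(\ess_1+\ess_2)(\cos\chi\ket{11}-\sin\chi\ket{00})+\ess_3\ket{\sigma}$, so that $-\iu(\tilde H_0+H_c)\ket{\sigma}=(\ess_1+\ess_2)(-\sin\chi\ket{00}+\cos\chi\ket{11})=\tfrac{d}{dt}\ket{\sigma}$ precisely when $\dot\chi=\ess_1+\ess_2$; note that the $\ess_3\ket{\sigma}$ term is a mere phase that $H_c$ removes. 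Transporting back by $\Ad_U$ (which fixes $H_c$ since it is proportional to the identity), the full trajectory is $U\ket{\sigma(t)}$, which is locally equivalent to $\ket{\sigma(t)}$ throughout and is maximally entangled at $t=T=\tfrac{\pi/4}{\ess_1+\ess_2}$; the optional steps~3 and~4 then select and stabilize the target state.

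The step I expect to be the main obstacle is reconciling ``minimal'' with the infimum caveat of the footnote. The constant middle pulse occupies exactly $T$, but step~1 must realize the local unitary $U$ essentially instantaneously, which is only attainable in the limit of unbounded local controls; the same holds for step~3. I would therefore argue that for every $\epsilon>0$ the protocol runs in time $T+\epsilon$, so $T$ is the infimum and meets the lower bound. It is worth stressing why the bulk evolution exhibits no divergence even though $\chi=0$ and $\chi=\pi/4$ are non-regular points: in~\eqref{eq:comp-dist} the singular $\sec(2\chi)$, $\tan(\chi)$ and $\cot(\chi)$ terms all multiply entries of $C'$ that vanish in the optimal frame, so the entire singular behaviour is concentrated in the single instantaneous rotation $U$ at $t=0$. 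Finally, the reverse (disentangling) direction follows by time-reversal, since both the reduced rotation and the constant $H_c$ are reversible.
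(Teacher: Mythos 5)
Your proposal is correct and takes essentially the same route as the paper: the lower bound is the speed limit $\omega^\star=\ess_1+\ess_2$ of Proposition~\ref{prop:two-qubits-speed-limit} (you make explicit, via the Equivalence Theorem, the projection step that the paper leaves implicit), and the protocol is built in the same optimal frame $U$ with the same constant compensating Hamiltonian $H_c=-\ess_3\,\id\otimes\id$ read off from the computation following~\eqref{eq:comp-dist}. The only deviation is that you verify the lift by a direct Schr\"odinger-equation computation for $U\ket{\sigma(t)}$ --- checking $\tilde H_0\ket{\sigma}=\iu(\ess_1+\ess_2)(\cos\chi\ket{11}-\sin\chi\ket{00})+\ess_3\ket{\sigma}$, so that $H_c$ cancels the residual phase and $\dot\chi=\ess_1+\ess_2$ --- rather than invoking the lifting proposition of the reduced-control-system paper; this self-contained verification (together with your correct observation that every singular $\tan\chi$, $\cot\chi$, $\sec(2\chi)$ term in~\eqref{eq:comp-dist} multiplies an entry of $C'$ that vanishes in the optimal frame) is sound and, if anything, strengthens the argument.
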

\noindent Note that the local controls in Step~\ref{it:ctrl-dist}\ only apply a global phase, and so they may be omitted if the global phase is neglected.

\begin{remark}
The fact that the time-optimal solutions in the reduced control system~\eqref{eq:reduced} are given by rotations of constant speed $\pm\omega^\star$ significantly simplifies the problem since integrating the solution is essentially trivial, and when lifting the control to the full control system~\eqref{eq:bilinear} the direct term vanishes. 
In the setting of~\cite{QubitReduced24} for example this simplification does not occur.
\end{remark}


\smallskip Another important task is that of stabilizing a state such as the maximally entangled state obtained in the previous section.
More precisely, here we want to stabilize a state with a certain set of singular values.
One can show abstractly, cf.~\cite[Prop.~18]{BipartiteReduced24}, that for regular states this is always possible with a fixed control Hamiltonian.
It is easy to see that by choosing a local unitary $U=V\otimes W$ such that $C'=C(U^*H_0U)$ is diagonal, it holds that $H_U\sigma=0$ for any $\sigma\in S^1$.
Thus, from~\eqref{eq:comp-dist} we find that a corresponding local stabilizing Hamiltonian is given by
\begin{align*}
V^*EV=W^*FW
=-\tfrac12(C'_{zz}+(C'_{xx}-C'_{yy})\csc(2\chi))\id + \tfrac12(C'_{xx}-C'_{yy})\cot(2\chi) P_z.
\end{align*}
Indeed, this also works for a maximally entangled state $\chi=\pi/4$, but diverges when approaching product states $\chi\to0$, even if we ignore the global phase.
In that case a different control Hamiltonian does the job.
As $\chi\to0$ the only terms in~\eqref{eq:comp-dist} which might blow up are those containing $\cot(\chi)$. Thus, if $U$ is chosen such that $C'_{xx}=C'_{yy}$, which can always be achieved, this does not happen.
The resulting compensating Hamiltonian $H_c=E\otimes\id+\id\otimes F$ is then defined by
\begin{align*}
V^*EV&=-\tfrac{C'_{zz}}2\id - \sec(2\chi)(C'_{yz}+C'_{zy}\sin(2\chi))P_y, \\
W^*FW&=-\tfrac{C'_{zz}}2\id - \sec(2\chi)(C'_{zy}+C'_{yz}\sin(2\chi))P_y.
\end{align*}
The examples above show that there might exist many local unitaries $U=V\otimes W$ such that $H_U=0$, and each choice yields its own compensating Hamiltonian.
In particular by choosing the right $U$ we were able to prevent the controls from blowing up at product states and maximally entangled states.

\subsection{Two bosonic qubits} \label{sec:bosonic-qubits}


In addition to the systems composed of two distinguishable subsystems considered above, one may also consider indistinguishable subsystems.
Such systems are characterized by the fact that swapping the two subsystems changes the state only up to a global phase.
If this phase is $+1$ the system is called bosonic, and if it is $-1$ the system is called fermionic.
The theory goes through with only minor adaptations in this indistinguishable setting, see~\cite{BipartiteReduced24} for the details.
In particular the coupling and control Hamiltonians have to be symmetric under swapping as well.
We denote the symmetric local unitary Lie group and algebra by $\Uloc^s(d)$ and $\uloc^s(d)$ respectively.

Let us now consider the case of two bosonic qubits. 
Again the coupling Hamiltonian can be expressed in the Pauli basis as
$$
H_0=\textstyle\sum_{i,j=1}^3 C_{ij}\, P_i\otimes P_j, \quad C\in\R^{3,3}.
$$
This time the coefficient matrix $C$ is symmetric. 
Transforming the coupling Hamiltonian using a local unitary $U=V\otimes V\in\Uloc^s(d)$, it follows from Lemma~\ref{lemma:local-rot-trafo} that $C(U^*H_0U)=R_V^\top C(H_0)R_V$.

Here and in the next section we will make use of the following simple lemma about symmetric and Hermitian matrices of size $2\times2$:

\begin{lemma} \label{lemma:2by2} 
Consider a Hermitian matrix $H\in\iu\mf u(2)$ and let $\ell_1\geq\ell_2$ denote its eigenvalues. 
Then it holds that $|H_{12}|\leq\frac{\ell_1-\ell_2}{2}$ and there is a unitary $U\in\SU(2)$ such that $|(U^*HU)_{12}|=\frac{\ell_1-\ell_2}{2}$.
The analogous statement for a real symmetric $2\times2$ matrix and orthogonal conjugation also holds.
\end{lemma}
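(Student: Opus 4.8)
The plan is to pass to the Pauli (Bloch) decomposition, where both the off-diagonal entry and the eigenvalue gap acquire a transparent geometric meaning. First I would write $H=\alpha P_0+\vec n\cdot\vec P$ with $\alpha=\tfrac12\tr(H)\in\R$, $\vec n\in\R^3$, and $\vec P=(P_x,P_y,P_z)$. Since the identity part only shifts the spectrum by $\alpha$ and does not contribute to the off-diagonal entry, I can work with the traceless part $\vec n\cdot\vec P$, whose eigenvalues are $\pm|\vec n|$; hence $\ell_1-\ell_2=2|\vec n|$. Reading off the off-diagonal entry from the explicit Pauli matrices gives $H_{12}=n_x-\iu n_y$, so $|H_{12}|^2=n_x^2+n_y^2\leq|\vec n|^2=\bigl(\tfrac{\ell_1-\ell_2}{2}\bigr)^2$, which is exactly the claimed bound, with equality if and only if $n_z=0$.

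For the tightness statement, the key observation is that conjugation by $U\in\SU(2)$ rotates the Bloch vector: by the relation $\Ad_U(\vec a\cdot\vec P)=(R_U\vec a)\cdot\vec P$ recalled in the proof of Lemma~\ref{lemma:local-rot-trafo}, the traceless part of $U^*HU$ has Bloch vector $R_U^\top\vec n$, and $R_{(\cdot)}\colon\SU(2)\to\SO(3)$ is surjective. I would therefore choose $U$ so that $R_U^\top$ rotates $\vec n$ into the $xy$-plane, i.e.\ so that its $z$-component vanishes; then $n_z=0$ and the bound is saturated. Because the spectrum and the off-diagonal entry are unchanged under multiplying $U$ by a central phase, the rotating unitary can be taken in $\SU(2)$ at no cost, so the determinant-one requirement is automatic.

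The real symmetric case is identical once one notes that real symmetric matrices are spanned by $P_0,P_x,P_z$ (the matrix $P_y$ being imaginary). Writing $H=\alpha P_0+n_x P_x+n_z P_z$, the traceless part has eigenvalues $\pm\sqrt{n_x^2+n_z^2}$ and off-diagonal entry $H_{12}=n_x$, giving $|H_{12}|=|n_x|\leq\sqrt{n_x^2+n_z^2}=\tfrac{\ell_1-\ell_2}{2}$, again with equality iff $n_z=0$. Orthogonal conjugation by $\SO(2)$ rotates the in-plane vector $(n_x,n_z)$, so one rotates it onto the $P_x$-axis to achieve $n_z=0$.

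I do not expect any real obstacle here, as the computation is elementary. The only points requiring a little care are the conceptual identification that $|H_{12}|$ measures only the in-plane ($n_x,n_y$) part of the Bloch vector while the eigenvalue gap is its full length, and the routine verification that the rotating unitary can be chosen in $\SU(2)$ rather than merely $\U(2)$.
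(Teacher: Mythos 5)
Your proof is correct, but it takes a genuinely different route from the paper's. The paper argues by direct algebra: unitary invariance of the Frobenius norm gives $(H_{11})^2+2|H_{12}|^2+(H_{22})^2=\ell_1^2+\ell_2^2$, and writing $H_{11}=\tfrac{\ell_1+\ell_2}{2}+x$, $H_{22}=\tfrac{\ell_1+\ell_2}{2}-x$ yields $2|H_{12}|^2=\tfrac{(\ell_1-\ell_2)^2}{2}-2x^2$; tightness is then shown by explicitly conjugating $\diag(\ell_1,\ell_2)$ by the Hadamard-type matrix $\tfrac1{\sqrt2}\bigl(\begin{smallmatrix}1&1\\1&-1\end{smallmatrix}\bigr)$. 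You instead work in the Bloch picture, where the underlying identity is the same in disguise (your $n_z$ is the paper's $x$, and $n_x^2+n_y^2=|\vec n|^2-n_z^2$), and you obtain tightness from surjectivity of $\SU(2)\to\SO(3)$ by rotating $\vec n$ into the equatorial plane. Your version is more conceptual, meshes well with Lemma~\ref{lemma:local-rot-trafo} (which you cite correctly for the adjoint action), and has one small advantage: your conjugating element lands in $\SU(2)$ automatically, whereas the paper's explicit matrix has determinant $-1$ and strictly needs a central phase to meet the lemma's $\SU(2)$ claim --- precisely the point your closing remark covers. What the paper's construction buys in exchange is the explicit balancing unitary itself, which is reused downstream: the proof of Proposition~\ref{prop:fermion-speed-limit} mixes levels $1,3$ and $2,4$ ``as in the proof of Lemma~\ref{lemma:2by2}'', so having the concrete matrix on hand matters there, while your surjectivity argument would have to be unwound through the double cover to produce it. Your treatment of the real symmetric case (span of $P_0,P_x,P_z$, with $\SO(2)$ conjugation rotating $(n_x,n_z)$ by twice the angle) is likewise correct and matches the paper's ``almost identical'' remark.
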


\begin{proof}
Due to the unitary invariance of the Frobenius norm it holds that
$(H_{11})^2 + 2|H_{12}|^2 + (H_{22})^2 = \ell_1^2 + \ell_2^2$.
Since $H_{11}+H_{22}=\ell_1+\ell_2$ there is some $x\in\R$ such that $H_{11}=\frac{\ell_1+\ell_2}{2}+x$ and $H_{22}=\frac{\ell_1+\ell_2}{2}-x$.
Together this gives
$$
2|H_{12}|^2 
= \ell_1^2 + \ell_2^2 - (\tfrac{\ell_1+\ell_2}{2}+x)^2 - (\tfrac{\ell_1+\ell_2}{2}-x)^2
= \tfrac{(\ell_1-\ell_2)^2}{2}-2x^2.
$$
This proves the desired bound on $|H_{12}|$. It remains to show that there is a unitary $U$ such that the diagonal elements of $U^*HU$ coincide. For this we may assume that $H$ is diagonal. Then we compute
$$
\frac{1}{2}
\begin{pmatrix} 1 & 1 \\ 1 & -1 \end{pmatrix}
\begin{pmatrix} \ell_1 & 0 \\ 0 & \ell_2 \end{pmatrix}
\begin{pmatrix} 1 & 1 \\ 1 & -1 \end{pmatrix}
=
\frac{1}{2}\begin{pmatrix} \ell_1+\ell_2 & \ell_1-\ell_2 \\ \ell_1-\ell_2 & \ell_1+\ell_2 \end{pmatrix},
$$
and this concludes the proof for the Hermitian case.
The real symmetric case is almost identical.
\end{proof}


This time the exact speed limit is given in terms of the eigenvalues of the coefficient matrix $C$:

\begin{proposition}
Let $\ell_i(C)$ for $i=1,2,3$ denote the eigenvalues of $C$ in non-increasing order.
Then $\omega^\star=\ell_1(C)-\ell_3(C)$.
\end{proposition}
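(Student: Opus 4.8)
The plan is to reduce the statement to maximizing a single off-diagonal matrix entry, exactly as in the distinguishable case, but now respecting the symmetry constraint $W=V$ forced by the bosonic setting. First I would repeat the opening of the proof of Proposition~\ref{prop:two-qubits-speed-limit}, specialized to symmetric local unitaries $U=V\otimes V$. Setting $\tilde H_0=U^*H_0U$, Lemma~\ref{lemma:local-rot-trafo} gives $C(\tilde H_0)=R_V^\top C R_V=:C'$, which is again symmetric since $C$ is symmetric and $R_V$ is orthogonal. The same computation of the angular velocity via Proposition~\ref{prop:induced-vf} that produced $\omega(H_U)=(R_V^\top C R_W)_{12}+(R_V^\top C R_W)_{21}$ in the distinguishable case now collapses, because $C'_{12}=C'_{21}$, to
\[
\omega(H_U)=2\,C'_{12}=2\,(R_V^\top C R_V)_{12}.
\]
Since $V\mapsto R_V$ is surjective onto $\SO(3)$ and the set of achievable angular velocities is a symmetric interval, it remains to show $\max_{R\in\SO(3)}|(R^\top C R)_{12}|=\tfrac12(\ell_1(C)-\ell_3(C))$.

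For the upper bound I would write $(R^\top C R)_{12}=u^\top C w$, where $u$ and $w$ are the first two columns of $R$, hence orthonormal. The key observation is that the orthogonality $u\perp w$ makes this expression invariant under shifting $C$ by any multiple of the identity: $u^\top C w=u^\top(C-cI)w$ for every $c\in\R$. Choosing $c=\tfrac12(\ell_1+\ell_3)$ centers the spectrum, so that $\|C-cI\|_{\mathrm{op}}=\tfrac12(\ell_1-\ell_3)$ (using $\ell_1\geq\ell_2\geq\ell_3$), and Cauchy--Schwarz yields $|u^\top C w|\leq\|C-cI\|_{\mathrm{op}}=\tfrac12(\ell_1-\ell_3)$. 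Alternatively, and more in the spirit of the tools just assembled, one may read $C'_{12}$ as the off-diagonal entry of the $2\times2$ compression $M$ of $C$ to $\linspan(u,w)$; the real case of Lemma~\ref{lemma:2by2} bounds $|M_{12}|$ by half the spread of the eigenvalues of $M$, and the Cauchy interlacing theorem bounds that spread by $\ell_1-\ell_3$.

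For tightness I would choose $R$ so that $u$ and $w$ lie in the plane spanned by the eigenvectors of $C$ for $\ell_1$ and $\ell_3$; on that plane $C$ acts as $\diag(\ell_1,\ell_3)$, and the real case of Lemma~\ref{lemma:2by2} supplies an in-plane rotation making the off-diagonal entry equal to $\tfrac12(\ell_1-\ell_3)$ (concretely the $\pm 45^\circ$ combination, taking $u,w$ to be $\tfrac1{\sqrt2}$ times the sum and difference of the two chosen eigenvectors). This attains the bound, and combined with $\omega(H_U)=2\,C'_{12}$ it gives $\omega^\star=\ell_1(C)-\ell_3(C)$.

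I expect the only real subtlety to be the upper bound: one must notice that the orthogonality $u\perp w$ is precisely what licenses the shift by $cI$ (equivalently, what makes the interlacing estimate applicable), since it is this freedom that replaces the naive operator-norm bound $\ell_1$ by the correct centered bound $\tfrac12(\ell_1-\ell_3)$. Everything else is a direct specialization of the distinguishable computation together with Lemma~\ref{lemma:2by2}.
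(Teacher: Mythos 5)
Your proof is correct, and while its skeleton matches the paper's (specialize to $U=V\otimes V$, reduce to maximizing $2(R_V^\top C R_V)_{12}$ over $\SO(3)$, then exhibit the $45^\circ$ mixing of the $\ell_1$- and $\ell_3$-eigenvectors for tightness), your upper bound is obtained by a genuinely different and more elementary argument. The paper bounds $C'_{12}+C'_{21}$ by invoking the Schur--Horn theorem: the diagonal of $R^\top CR$ is majorized by $(\ell_1,\ell_2,\ell_3)$, so any two diagonal entries differ by at most $\ell_1-\ell_3$; applying this to a further in-plane rotation that diagonalizes the upper-left $2\times2$ block bounds that block's eigenvalue spread, and Lemma~\ref{lemma:2by2} then bounds the off-diagonal entry. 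Your primary route replaces all of this with the observation that orthonormality $u\perp w$ of the first two columns of $R$ makes $u^\top Cw$ invariant under the shift $C\mapsto C-cI$, so that centering the spectrum at $c=\tfrac12(\ell_1+\ell_3)$ and applying Cauchy--Schwarz gives $|u^\top Cw|\leq\|C-cI\|_{\mathrm{op}}=\tfrac12(\ell_1-\ell_3)$ directly. This avoids majorization entirely, is self-contained, and generalizes verbatim to $n\times n$ symmetric (or Hermitian) $C$, where it shows $\max|u^\top Cw|=\tfrac12(\ell_1-\ell_n)$ over orthonormal pairs; what it gives up is the uniformity of the paper's toolkit, since Kostant/Schur--Horn convexity is the workhorse used in the distinguishable case (Proposition~\ref{prop:two-qubits-speed-limit}) and elsewhere. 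Your alternative route (compression to $\linspan(u,w)$, Lemma~\ref{lemma:2by2}, Cauchy interlacing) is essentially the paper's argument with interlacing substituted for Schur--Horn, and is arguably the cleaner phrasing of the two, since the paper's passage from diagonal majorization to the eigenvalue spread of the $2\times2$ block is exactly an interlacing-type statement in disguise. One small point worth making explicit in your write-up: any orthonormal pair $(u,w)$ extends to a rotation by taking $u\times w$ as third column, so the tightness construction indeed lands in $\SO(3)$, and surjectivity of $V\mapsto R_V$ (the double cover $\SU(2)\to\SO(3)$) then produces the required local unitary $V\otimes V$.
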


\begin{proof} 
The idea is similar to the proof of Proposition~\ref{prop:two-qubits-speed-limit}. 
If $U=V\otimes V$, then by Lemma~\ref{lemma:local-rot-trafo} it holds that $C(U^*H_0U)=R_V^\top CR_V$.
Again it holds that $\omega(H_U)=(\tilde H_{\id})_{12}=C(U^*H_0U)_{12}+C(U^*H_0U)_{21}$.
Due to Kostant's Convexity Theorem~\cite{Kostant73}, or more precisely the Schur--Horn Theorem~\cite{Schur23,Horn54}, it holds that $|C(U^*H_0U)_{11}-C(U^*H_0U)_{22}|\leq\ell_1(C)-\ell_3(C)$,
and hence if we denote by $\tilde C(U^*H_0U)$ the upper left $2\times2$ block in $C(U^*H_0U)$, 
it holds that $|\ell_1(\tilde C(U^*H_0U))-\ell_2(\tilde C(U^*H_0U))|\leq\ell_1(C)-\ell_3(C)$.
By Lemma~\ref{lemma:2by2}, $C(U^*H_0U)_{12}+C(U^*H_0U)_{21}\leq\ell_1(C)-\ell_3(C)$ and by the proof of the same lemma this bound is tight.
Explicitly, the bound is achieved by choosing $U=V\otimes V$ such that
\begin{align*}
C(U^*H_0U)=R_V^\top CR_V=
\begin{pmatrix} 
\tfrac{\ell_1+\ell_3}2 & \tfrac{\ell_1-\ell_3}2 & 0 \\ 
\tfrac{\ell_1-\ell_3}2 & \tfrac{\ell_1+\ell_3}2 & 0 \\
0 & 0 & \ell_2
\end{pmatrix}.
\end{align*}
This concludes the proof.
\end{proof}


Denoting $C'=C(U^*H_0U)=R_V^\top CR_V$, the compensating Hamiltonian is given by $H_c=E\otimes\id+\id\otimes E$ where
$$
V^*EV=-\Big(\begin{smallmatrix}
\tfrac12 (C'_{zz} + (C'_{xx} - C'_{yy}) \tan(\chi)) & \sec(2\chi) (C'_{xz} - \iu C'_{yz} - (C'_{xz} + \iu C'_{yz}) \sin(2\chi)) \\
\sec(2\chi) (C'_{xz} + \iu C'_{yz} - (C'_{xz} - \iu C'_{yz}) \sin(2 \chi)) & \tfrac12 (C'_{zz} + (C'_{xx} - C'_{yy}) \cot(\chi))
\end{smallmatrix}\Big),
$$
where we used the results of~\cite[App.~A]{BipartiteReduced24}.

With these results, the derivation of time-optimal and stabilizing controls is straightforward and analogous to the previous section.


\subsection{Two fermionic four-level systems} \label{sec:fermionic-qu4its}

In the fermionic case the space of singular values has dimension $\floor{\tfrac{d}2}$.
Hence, if we consider two coupled four-level systems ($d=4$), the reduced state space is again a circle. 
Due to Lemma~\ref{lemma:normal-form-sym} we can always write the coupling Hamiltonian in the diagonal form. 
Here we will focus on the rank one case, i.e., $H=A\otimes A$ for some Hamiltonian $A\in\iu\mf u(4)$.

\begin{lemma} \label{lemma:fermi-reduced} 
In the fermionic $d=4$ case the reduced control system can equivalently be formulated for the Schmidt angle $\chi$ as
$\dot\chi = \omega_U^a$, where $\omega_U^a = \Im((U^*AU)_{13}(U^*AU)_{24}-(U^*AU)_{14}(U^*AU)_{23})$.
\end{lemma}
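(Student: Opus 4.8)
The plan is to reduce everything to the evolution of the antisymmetric coefficient matrix of the state. A fermionic pure state of two four-level systems is represented by a matrix $M\in\C^{4\times4}$ with $M=-M^\top$ via $\ket\psi=\sum_{i,j}M_{ij}\ket{ij}$, and by the Hua factorization (cf.\ Lemma~\ref{lemma:normal-form-sym}) it can be brought by a symmetric local unitary into the block-diagonal form whose only nonzero entries are $M_{12}=-M_{21}=\sigma_1$ and $M_{34}=-M_{43}=\sigma_2$, with $(\sigma_1,\sigma_2)\propto(\cos\chi,\sin\chi)$. After using the fast local control to apply $V\otimes V$ instantaneously, the coupling $A\otimes A$ becomes $\tilde A\otimes\tilde A$ with $\tilde A=U^*AU$ (writing $U$ for $V$ as in the statement). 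Since $(A\otimes A)\sum_{ij}M_{ij}\ket{ij}=\sum_{kl}(AMA^\top)_{kl}\ket{kl}$, the Schrödinger equation induces on matrices the flow $\dot M=-\iu\,\tilde A M\tilde A^\top$, which manifestly preserves antisymmetry.

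First I would extract the reduced dynamics exactly as in the framework behind Proposition~\ref{prop:induced-vf}: the rate of change of each singular value is the real part of the corresponding canonical (diagonal-block) component of $\dot M$ at the diagonal point, the complementary off-diagonal part being absorbed by the compensating Hamiltonian. Evaluating $\dot M=-\iu\tilde A M\tilde A^\top$ at the diagonal $M$, a direct expansion gives
\[
(\tilde A M\tilde A^\top)_{12}=\sigma_1\big(\tilde A_{11}\tilde A_{22}-\tilde A_{12}\tilde A_{21}\big)+\sigma_2\big(\tilde A_{13}\tilde A_{24}-\tilde A_{14}\tilde A_{23}\big),
\]
and an analogous expression for the $(3,4)$ entry. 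The key simplification is the Hermiticity of $\tilde A$: the diagonal-block coefficient $\tilde A_{11}\tilde A_{22}-\tilde A_{12}\tilde A_{21}=\tilde A_{11}\tilde A_{22}-|\tilde A_{12}|^2$ is a principal minor, hence real, so since $\Re(-\iu z)=\Im(z)$ it drops out after projection and only the off-diagonal minor $\tilde A_{13}\tilde A_{24}-\tilde A_{14}\tilde A_{23}$ survives. This yields $\dot\sigma_1=\sigma_2\,\omega_U^a$ with $\omega_U^a=\Im(\tilde A_{13}\tilde A_{24}-\tilde A_{14}\tilde A_{23})$; the same computation for the $(3,4)$ entry, together with the observation that its cross coefficient is the complex conjugate of the one above (again by Hermiticity), gives $\dot\sigma_2=-\sigma_1\,\omega_U^a$.

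Finally I would pass to the angle: with $(\sigma_1,\sigma_2)\propto(\cos\chi,\sin\chi)$ one has $\dot\chi=(\sigma_1\dot\sigma_2-\sigma_2\dot\sigma_1)/(\sigma_1^2+\sigma_2^2)=-\omega_U^a$, so up to the fixed orientation convention for $\chi$ the reduced equation is $\dot\chi=\omega_U^a$, as claimed. I expect the main obstacle to lie not in the algebra but in pinning down the fermionic reduction itself: one must identify the fermionic analogue of the diagonal subspace $\Sigma$ and of the projection $\Pi_\Sigma$ for antisymmetric matrices and justify that the reduced rate is precisely the real part of the canonical component of $\dot M$, and then keep careful track of the Hua-factorization normalization and the sign/orientation conventions so that the final proportionality constant is exactly $+1$.
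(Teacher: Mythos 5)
Your proof is correct and is essentially the paper's argument spelled out: the paper disposes of this lemma in one line by appeal to Proposition~\ref{prop:induced-vf}, and your computation---projecting the flow $\dot M=-\iu\,\tilde A M\tilde A^\top$ onto the Hua canonical components and using Hermiticity of $\tilde A$ to discard the real principal minors---is precisely that induced-vector-field formula instantiated for the fermionic canonical form, so the cross minor $\tilde A_{13}\tilde A_{24}-\tilde A_{14}\tilde A_{23}$ emerges exactly as claimed. The residual sign ($\dot\chi=-\omega_U^a$ versus $\dot\chi=\omega_U^a$) is pure convention and immaterial: replacing $V$ by $V\diag(1,1,1,-1)$ flips the sign of $\omega_U^a$, so the set of achievable angular velocities is symmetric about $0$, and your scale-invariant formula for $\dot\chi$ also confirms that the Hua normalization introduces no extra factor.
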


\begin{proof}
This follows immediately from Proposition~\ref{prop:induced-vf}.
\end{proof}


We begin by deriving upper and lower speed limits for the Schmidt angle $\chi$.

\begin{proposition} \label{prop:fermion-speed-limit}
Let $\ell_i$ denote the eigenvalues of $A$ in non-increasing order.
Then it holds that 
$$
\tfrac14(\ell_1-\ell_3)(\ell_2-\ell_4)\leq\omega^*\leq\tfrac1{16}(\ell_1+\ell_2-\ell_3-\ell_4)^2,
$$
and the two bounds coincide when $\ell_1+\ell_4=\ell_2+\ell_3$.
\end{proposition}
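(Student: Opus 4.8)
The plan is to read the quantity $\omega_U^a$ of Lemma~\ref{lemma:fermi-reduced} as the imaginary part of a $2\times2$ determinant and then sandwich it between the two claimed bounds. Writing $M=U^*AU$ and splitting the index set into $\{1,2\}$ and $\{3,4\}$, the relevant off-diagonal block is $B=\begin{pmatrix}M_{13}&M_{14}\\M_{23}&M_{24}\end{pmatrix}$, so that $\omega_U^a=\Im\det B$. Since $M$ is Hermitian and unitarily equivalent to $A$, as $U$ ranges over $\U(4)$ the matrix $M$ ranges over the full unitary orbit of $A$, i.e.\ over all Hermitian matrices with spectrum $\ell_1\ge\ell_2\ge\ell_3\ge\ell_4$; the task is thus to extremize $\Im\det B$ over this orbit.

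For the upper bound I would estimate $\omega_U^a=\Im\det B\le|\det B|=\sigma_1(B)\sigma_2(B)\le\big(\tfrac{\sigma_1(B)+\sigma_2(B)}2\big)^2$, where $\sigma_1(B)\ge\sigma_2(B)$ are the singular values of $B$ and the last step is AM--GM. It then remains to bound the nuclear norm $\sigma_1(B)+\sigma_2(B)$. The key device is the signature matrix $S=\diag(1,1,-1,-1)$: the block-off-diagonal part $\tilde M:=\tfrac12(M-SMS)=\left(\begin{smallmatrix}0&B\\B^\dagger&0\end{smallmatrix}\right)$ is Hermitian with eigenvalues $\pm\sigma_1(B),\pm\sigma_2(B)$, so $\sigma_1(B)+\sigma_2(B)=\lambda_1(\tilde M)+\lambda_2(\tilde M)$. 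Applying Ky Fan's inequality to the splitting $\tilde M=\tfrac12 M+\tfrac12(-SMS)$, and noting that $-SMS$ is unitarily equivalent to $-M$ (hence has spectrum $-\ell_i$), the two largest eigenvalues contribute at most $\tfrac{\ell_1+\ell_2}2$ and $\tfrac{-\ell_3-\ell_4}2$ respectively. This yields $\sigma_1(B)+\sigma_2(B)\le\tfrac12(\ell_1+\ell_2-\ell_3-\ell_4)$ and hence $\omega^*\le\tfrac1{16}(\ell_1+\ell_2-\ell_3-\ell_4)^2$.

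For the lower bound I would exhibit a single matrix in the orbit realizing $\tfrac14(\ell_1-\ell_3)(\ell_2-\ell_4)$. Using Lemma~\ref{lemma:2by2}, I build the two $2\times2$ Hermitian blocks with maximal off-diagonal entries: pair $\ell_1,\ell_3$ in coordinates $(1,3)$ with $M_{13}=\iu\tfrac{\ell_1-\ell_3}2$ and pair $\ell_2,\ell_4$ in coordinates $(2,4)$ with $M_{24}=\tfrac{\ell_2-\ell_4}2$, all remaining off-diagonal entries (in particular $M_{14}$ and $M_{23}$) being zero. This $M$ is block diagonal for the partition $\{1,3\}\cup\{2,4\}$, so its spectrum is $\{\ell_1,\ell_3\}\cup\{\ell_2,\ell_4\}$ as required, and $\det B=M_{13}M_{24}$ gives $\omega_U^a=\Im\big(\iu\tfrac{(\ell_1-\ell_3)(\ell_2-\ell_4)}4\big)=\tfrac14(\ell_1-\ell_3)(\ell_2-\ell_4)$. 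The choice of pairing is not arbitrary: a short computation shows $(\ell_1-\ell_3)(\ell_2-\ell_4)-(\ell_1-\ell_4)(\ell_2-\ell_3)=(\ell_1-\ell_2)(\ell_3-\ell_4)\ge0$, so pairing the first with the third value is the best of the block constructions.

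Finally, the equality claim falls out by comparing the two bounds directly: since $(\ell_1-\ell_3)+(\ell_2-\ell_4)=\ell_1+\ell_2-\ell_3-\ell_4$, AM--GM gives $\tfrac14(\ell_1-\ell_3)(\ell_2-\ell_4)\le\tfrac1{16}(\ell_1+\ell_2-\ell_3-\ell_4)^2$ with equality precisely when $\ell_1-\ell_3=\ell_2-\ell_4$, i.e.\ $\ell_1+\ell_4=\ell_2+\ell_3$; in that case the upper and lower bounds coincide and pin down $\omega^*$ exactly. I expect the main obstacle to be the nuclear-norm estimate: one must resist the tempting but lossy bounds $\sigma_1(B)\le\tfrac{\ell_1-\ell_4}2$ and $\sigma_2(B)\le\tfrac{\ell_2-\ell_3}2$ (which would yield the wrong, smaller constant), and instead control the \emph{sum} $\sigma_1(B)+\sigma_2(B)$ as a single Ky Fan quantity; everything else is routine.
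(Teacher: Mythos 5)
Your proof is correct, and your upper-bound argument is genuinely different from the paper's. The paper first uses block-diagonal unitaries and the complex SVD to reduce to the case where the off-diagonal block is diagonal with $\omega_{\id}^a=|A_{13}A_{24}|$, then invokes Lemma~\ref{lemma:2by2} on the principal submatrices $A_{[13]},A_{[24]}$ together with Schur--Horn to recast the problem as maximizing $\tfrac14(a-c)(b-d)$ over all $(a,b,c,d)$ majorized by $(\ell_1,\ell_2,\ell_3,\ell_4)$, which it then solves by a somewhat informal greedy deformation argument. You bypass the majorization picture entirely: the chain $\Im\det B\le\sigma_1(B)\sigma_2(B)\le\bigl(\tfrac{\sigma_1(B)+\sigma_2(B)}{2}\bigr)^2$, the identification of $\sigma_1(B)+\sigma_2(B)$ with the sum of the two largest eigenvalues of the Hermitian dilation $\tfrac12(M-SMS)$, and a single application of Ky Fan's inequality to $\tfrac12 M+\tfrac12(-SMS)$ give the constant $\tfrac1{16}$ in three clean, fully rigorous steps. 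What the paper's route buys is structural insight into where the relaxed optimum sits (at $a=b$, $c=d$); what yours buys is brevity and rigor, replacing the greedy argument by standard matrix inequalities. Your lower bound is essentially the paper's construction (pair $\ell_1,\ell_3$ and $\ell_2,\ell_4$ via the rotation of Lemma~\ref{lemma:2by2}), but you handle a detail the paper glosses over: the paper's explicit matrix is real, so its off-diagonal block has real determinant and gives $\Im\det B=0$ as written; one needs an extra diagonal phase unitary (your $M_{13}=\iu\tfrac{\ell_1-\ell_3}{2}$) to actually realize the claimed value. Your verification that the pairing $(1,3),(2,4)$ dominates $(1,4),(2,3)$, and the AM--GM equality case $\ell_1-\ell_3=\ell_2-\ell_4\iff\ell_1+\ell_4=\ell_2+\ell_3$, are both correct and match what the paper leaves implicit.
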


\begin{proof}
First we show that we can assume certain elements of $A$ to vanish.
Let $A_{(12)}$ denote the $2\times2$ block of $A$ in the upper right corner.
Then, by Lemma~\ref{lemma:fermi-reduced} it holds that $\omega_U^a=\Im(\det((U^*AU)_{(12)}))$. 
Using a block diagonal unitary change of basis and the complex singular value decomposition we may assume that $A_{(12)}$ is diagonal and additionally that $\omega_{\id}^a=|A_{13}A_{24}|$.
Now let $A_{[ij]}$ denote the $2\times2$ submatrices of $A$ obtained by deleting all but the $i$-th and $j$-th row and column.
Let $a\geq c$ denote the eigenvalues of $A_{[13]}$ and $b\geq d$ those of $A_{[24]}$.
Then by Lemma~\ref{lemma:2by2} it holds that $\omega_U^a=\tfrac14(a-c)(b-d)$, and there exists a unitary transformation $U$ such that the diagonal of $U^*AU$ is $(a,b,c,d)$.
Hence we need to solve the optimization problem
$$
\max \tfrac14(a-c)(b-d) \text{ subject to } (a,b,c,d)\preceq(\ell_1,\ell_2,\ell_3,\ell_4).
$$
This can be done using a greedy optimization approach.
First we show that the maximum is achieved when $a-c=b-d$.
Indeed, if $a-c>b-d$ we can smoothly move in the direction $(-1,1,0,0)$, which preserves majorization and increases the objective value, until equality is achieved. Similarly, if $a-c<b-d$ we move in the direction $(0,0,-1,1)$.
Finally, by moving in the direction $(-1,1,-1,1)$, which does not affect the objective value, we may assume that additionally $a=b$ and $c=d$.
Under these additional constraints the maximum is easily seen to be $\tfrac1{16}(\ell_1+\ell_2-\ell_3-\ell_4)^2$, as desired.

By Lemma~\ref{lemma:2by2} the lower bound can be achieved as follows. 
First diagonalize $A$ to obtain the form $\diag(\ell_1,\ell_2,\ell_3,\ell_4)$.
Then, using a unitary mixing the levels $1$ and $3$, as well as $2$ and $4$, as in the proof of Lemma~\ref{lemma:2by2}, we obtain the form
$$
U^*AU= 
\frac12
\begin{pmatrix}
\ell_1+\ell_3&0&\ell_1-\ell_3&0\\
0&\ell_2+\ell_4&0&\ell_2-\ell_4\\
\ell_1-\ell_3&0&\ell_1+\ell_3&0\\
0&\ell_2-\ell_4&0&\ell_2+\ell_4
\end{pmatrix},
$$
which achieves the claimed lower bound.
\end{proof}

\begin{remark}
To see that the upper bound of Proposition~\ref{prop:fermion-speed-limit} is not tight in general, consider $A$ of rank one, e.g., $A=\diag(1,0,0,0)$. Then it is easy to verify that $\omega_U^a\equiv0$ for all $U\in\Uloc^s(4,4)$.
\end{remark}


\smallskip In a basis as described by Proposition~\ref{prop:fermion-speed-limit}, the local compensating Hamiltonian takes the form
$E\otimes\id+\id\otimes E$ where $V^*EV=-\tilde E-\tfrac18(\ell_1+\ell_3)(\ell_2+\ell_4)\id$ and
$$
\tilde E=\begin{pmatrix}
\tfrac{(\ell_1-\ell_3)(\ell_2-\ell_4)\tan(\chi)}{8} & 0 & \tfrac{(\ell_1-\ell_3)(\ell_2+\ell_4)}{4} & 0 \\
0 & \tfrac{(\ell_1-\ell_3)(\ell_2-\ell_4)\tan(\chi)}{8} & 0 & \tfrac{(\ell_1+\ell_3)(\ell_2-\ell_4)}{4} \\
\tfrac{(\ell_1-\ell_3)(\ell_2+\ell_4)}{4} & 0 & \tfrac{(\ell_1-\ell_3)(\ell_2-\ell_4)\cot(\chi)}{8} & 0 \\
0 & \tfrac{(\ell_1+\ell_3)(\ell_2-\ell_4)}{4} & 0 & \tfrac{(\ell_1-\ell_3)(\ell_2-\ell_4)\cot(\chi)}{8}
\end{pmatrix}.
$$
This Hamiltonian blows up near product states as $\chi\to k\pi/2$ but not near maximally entangled states.


\smallskip By choosing a local basis which diagonalizes $A$, we obtain the compensating local Hamiltonian
$V^*EV=-\diag(\tfrac{\ell_1\ell_2}2,\tfrac{\ell_1\ell_2}2,\tfrac{\ell_3\ell_4}2,\tfrac{\ell_3\ell_4}2)$,
which is independent of $\chi$.

\section{Optimal Control of Qutrits} \label{sec:opt-ctrl-qutrits}

In this section we consider a higher dimensional system, namely one composed of two distinguishable three-level systems (qutrits).
In this case there are three singular values and hence the state space of the reduced control system is the usual two-dimensional sphere $S^2$.
Compared to the previous section, it is now not at all obvious which path to take between two points on the Schmidt sphere in order to minimize the time.
To determine such an optimal path we will make use of the Pontryagin Maximum Principle (PMP)~\cite{Boscain21}.

In Section~\ref{sec:reduce} we define the reduced control system by characterizing the set of generators $\mf H$.
In Section~\ref{sec:pmp} we use the PMP to find time-optimal solutions to the reduced control system.
Finally we (approximately) lift these solutions to the original control system and derive corresponding control functions in Section~\ref{sec:lift}.

\subsection{Reducing the problem} \label{sec:reduce}

In this section we assume that the two distinguishable subsystems are qutrits, i.e.\ $d_1=d_2=3$, and that the coupling Hamiltonian $H_0=A\otimes B$ has rank one.
Moreover we assume that $A$ and $B$ have equidistant eigenvalues.%
\footnote{We say that $A$ has equidistant eigenvalues if $\ell_1(A)-\ell_2(A)=\ell_2(A)-\ell_3(A)$.}
We start with a simple lemma which bounds the size of the off-diagonal elements of $A$ and $B$ in terms of their respective eigenvalues.

\begin{lemma} \label{lemma:norm-bound}
Let $A\in\iu\mf{u}(3)$ with equidistant eigenvalues be given and let $a=(A_{32},A_{13},A_{21})\in\C^3$. 
Then it holds that $\|a\|_2\leq \tfrac12(\ell_{1}-\ell_{3})$ where the $\ell_{i}$ denote the eigenvalues of $A$ in non-increasing order.
\end{lemma}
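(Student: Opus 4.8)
The plan is to reduce everything to the unitary invariance of the Frobenius norm together with a single application of Cauchy--Schwarz, with the equidistance hypothesis entering only at the very last arithmetic step.

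First I would observe that the three entries collected in $a$ are, up to complex conjugation, exactly the three independent off-diagonal entries of the Hermitian matrix $A$. Since $A$ is Hermitian we have $|A_{32}|=|A_{23}|$, $|A_{13}|=|A_{31}|$ and $|A_{21}|=|A_{12}|$, so that
$$
\|a\|_2^2 = |A_{12}|^2+|A_{13}|^2+|A_{23}|^2 = \tfrac12\sum_{i\neq j}|A_{ij}|^2 .
$$
In other words, $\|a\|_2^2$ is precisely half of the off-diagonal Frobenius mass of $A$.

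Next I would use the unitary invariance of the Frobenius norm, $\sum_{i,j}|A_{ij}|^2 = \sum_i \ell_i^2$. Splitting off the (real) diagonal gives $2\|a\|_2^2 = \sum_i\ell_i^2 - \sum_i A_{ii}^2$, so to bound $\|a\|_2$ from above I only need a lower bound on $\sum_i A_{ii}^2$. This is supplied by Cauchy--Schwarz: since $\sum_i A_{ii} = \tr A = \sum_i \ell_i$ is fixed, we have $\sum_i A_{ii}^2 \geq \tfrac13(\tr A)^2 = \tfrac13(\sum_i\ell_i)^2$, whence
$$
\|a\|_2^2 \leq \tfrac12\Big(\textstyle\sum_i\ell_i^2 - \tfrac13(\sum_i\ell_i)^2\Big).
$$
Finally I would evaluate the right-hand side under the equidistance hypothesis: writing $\ell_2=m$ and $\ell_1=m+\delta$, $\ell_3=m-\delta$ with $\delta=\tfrac12(\ell_1-\ell_3)$, a direct computation gives $\sum_i\ell_i^2 = 3m^2+2\delta^2$ and $\tfrac13(\sum_i\ell_i)^2 = 3m^2$, so the bracket equals $2\delta^2$ and $\|a\|_2^2 \leq \delta^2 = \tfrac14(\ell_1-\ell_3)^2$, as claimed.

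I do not expect a genuine obstacle here; the only things to get right are the bookkeeping identity $\|a\|_2^2 = \tfrac12\sum_{i\neq j}|A_{ij}|^2$ and the realization that, for the inequality alone, nothing beyond Cauchy--Schwarz is needed. It is worth noting that invoking the Schur--Horn theorem would be overkill: majorization becomes relevant only if one additionally wants to exhibit a matrix attaining the bound (namely one with constant diagonal $A_{ii}=\tfrac13\tr A$, which is indeed majorized by the eigenvalues), whereas the lemma asserts only the inequality.
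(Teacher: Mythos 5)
Your proof is correct, and it is essentially the paper's argument: both rest on the unitary invariance of the Frobenius norm, splitting it into diagonal and off-diagonal mass, with equidistance entering only in the final arithmetic. The sole difference is bookkeeping: the paper first shifts $A$ by a multiple of the identity to make it traceless (so the eigenvalues become $-\ell,0,\ell$ and the diagonal contribution can simply be dropped), whereas you keep the trace and compensate via Cauchy--Schwarz on the diagonal, $\sum_i A_{ii}^2\geq\tfrac13(\tr A)^2$ --- two interchangeable devices.
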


\begin{proof}
Since the expression $\|a\|_2\leq \tfrac12(\ell_{1}-\ell_{3})$ is invariant under addition of a multiple of the identity to $A$, we may assume that $A$ is traceless, and hence the eigenvalues are $-\ell\leq0\leq\ell$. 
Now it holds that $2\ell^2=\|A\|_2^2\geq 2\|a\|_2^2$ and hence $\|a\|_2\leq\ell$ as desired.
\end{proof}

The first step in solving the optimal control problem is to understand the reduced control system.
We already know that the reduced control system is defined on the sphere $S^2$ and that the controls are given by the set $\mf H\subset\mf{so}(3)$ consisting of rotation generators.
The goal of this section is to understand the precise shape of this set (or at least its convex hull).
It is convenient to represent generators in $\mf{so}(3)$ using vectors in $\R^3$.
Indeed, for every $-H_U\in\mf H$ there is a unique vector $\omega_U\in\R^3$ such that $-H_U\sigma=\omega_U\times\sigma$.
Recall that the $1$-norm of such a vector is given by $\|\omega\|_1:=|\omega_1|+|\omega_2|+|\omega_3|.$

\begin{proposition} \label{prop:octahedral-bound}
Assume that $A$ and $B$ have equidistant eigenvalues. Then, for all $U\in\Uloc(3,3)$ it holds that
$$
\|\omega_U\|_1\leq \omega^\star(H_0) := \frac{(\ell_{1}(A)-\ell_{3}(A))(\ell_{1}(B)-\ell_{3}(B))}4.
$$
\end{proposition}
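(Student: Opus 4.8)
The plan is to make the induced vector field explicit via Proposition~\ref{prop:induced-vf}, read off the three components of $\omega_U$ as entries of an antisymmetric $3\times3$ matrix, and then control the resulting $1$-norm by combining the elementary inequality $|\Im z|\le|z|$ with Cauchy--Schwarz and Lemma~\ref{lemma:norm-bound}.

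First I would write $U=V\otimes W$ and set $\tilde A=V^*AV$ and $\tilde B=W^*BW$, which are Hermitian with the same eigenvalues as $A$ and $B$; in particular they are again equidistant. Since $H_0=A\otimes B$ has rank one, Proposition~\ref{prop:induced-vf} gives $H_U=\Im(\tilde A\circ\tilde B)$, i.e.\ $(H_U)_{ij}=\Im(\tilde A_{ij}\tilde B_{ij})$. Because $\tilde A,\tilde B$ are Hermitian this matrix is real and antisymmetric, confirming $H_U\in\mf{so}(3)$. Under the identification $-H_U\sigma=\omega_U\times\sigma$, the components of $\omega_U$ coincide up to sign with the three independent entries $(H_U)_{32},(H_U)_{13},(H_U)_{21}$, so that
$$
\|\omega_U\|_1=|(H_U)_{32}|+|(H_U)_{13}|+|(H_U)_{21}|=\sum_{(i,j)}|\Im(\tilde A_{ij}\tilde B_{ij})|,
$$
where the sum runs over $(i,j)\in\{(3,2),(1,3),(2,1)\}$. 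These are exactly the index pairs appearing in the vectors of Lemma~\ref{lemma:norm-bound}, which is precisely why that lemma was set up.

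Next, applying $|\Im z|\le|z|$ entrywise and then Cauchy--Schwarz to the nonnegative vectors $(|\tilde A_{32}|,|\tilde A_{13}|,|\tilde A_{21}|)$ and $(|\tilde B_{32}|,|\tilde B_{13}|,|\tilde B_{21}|)$ gives
$$
\|\omega_U\|_1\le\sum_{(i,j)}|\tilde A_{ij}|\,|\tilde B_{ij}|\le\Big(\sum_{(i,j)}|\tilde A_{ij}|^2\Big)^{1/2}\Big(\sum_{(i,j)}|\tilde B_{ij}|^2\Big)^{1/2}.
$$
Since $\tilde A$ and $\tilde B$ retain the (equidistant) eigenvalues of $A$ and $B$, Lemma~\ref{lemma:norm-bound} bounds the two factors by $\tfrac12(\ell_1(A)-\ell_3(A))$ and $\tfrac12(\ell_1(B)-\ell_3(B))$ respectively, whose product is exactly $\omega^\star(H_0)$.

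The computation itself is short; the only genuine care is needed in the first step. I would fix the sign conventions of the $\mf{so}(3)\cong\R^3$ isomorphism so that the components of $\omega_U$ really are the off-diagonal entries indexed as in Lemma~\ref{lemma:norm-bound}, and verify that conjugation by $V,W$ preserves the spectrum (and hence equidistance) so that the lemma applies to $\tilde A,\tilde B$ rather than to $A,B$ directly. Beyond this bookkeeping I expect no substantial obstacle, and the equidistance hypothesis enters solely through Lemma~\ref{lemma:norm-bound}.
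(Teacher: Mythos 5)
Your proposal is correct and follows essentially the same route as the paper's proof: apply Proposition~\ref{prop:induced-vf} to identify the components of $\omega_U$ with the entries $(H_U)_{32},(H_U)_{13},(H_U)_{21}$, then chain $|\Im z|\le|z|$, Cauchy--Schwarz, and Lemma~\ref{lemma:norm-bound} applied to $V^*AV$ and $W^*BW$. The only difference is that you spell out the bookkeeping (antisymmetry of $H_U$, the $\mf{so}(3)\cong\R^3$ identification, and invariance of the spectrum under conjugation) which the paper leaves implicit.
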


\begin{proof}
Using Proposition~\ref{prop:induced-vf}, the Cauchy--Schwarz inequality and Lemma~\ref{lemma:norm-bound} we compute
$$\textstyle \|\omega_U\|_1
=\sum_{i\in\{x,y,z\}} |\Im(\tilde a_i \tilde b_i)|
\leq
\sum_{i\in\{x,y,z\}} |\tilde a_i||\tilde b_i|
\leq
\|\tilde a\|_2\|\tilde b\|_2
\leq
\omega^\star,
$$
where, denoting $U=V\otimes W$, we set $\tilde a=((V^*AV)_{32},(V^*AV)_{13},(V^*AV)_{21})$ and similar for $\tilde B$.
\end{proof}
\noindent Geometrically this has a nice interpretation.
Let $O_3\subset\R^3$ denote the regular octahedron, i.e., the convex hull $O_3=\conv((\pm1,0,0),(0,\pm1,0),(0,0,\pm1))$, see Figure~\ref{fig:octahedron}.
Note also that this bound is stronger than the general bound obtained in~\cite[Sec.~3.2]{BipartiteReduced24}.

\begin{corollary}
The convex hull of the set of induced vector fields considered in $\R^3\cong\mf{so}(3)$ is a regular octahedron:
$$
\conv(\omega_U:U\in\Uloc(3,3)) = \omega^\star(H_0)\, O_3.
$$
Thus, in the relaxed control system, two reduced states $\sigma,\tau\in S^2$ can always be joined in time $T\in\omega^\star(H_0) \arccos(\sigma\cdot\tau)[1/\sqrt3,1]$.
\end{corollary}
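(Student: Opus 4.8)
\emph{The plan} is to establish the two inclusions that characterize the octahedron and then read the travel‑time bounds off its inradius and circumradius. The inclusion $\conv(\omega_U:U\in\Uloc(3,3))\subseteq\omega^\star(H_0)\,O_3$ is immediate from Proposition~\ref{prop:octahedral-bound}: it gives $\|\omega_U\|_1\le\omega^\star(H_0)$ for every $U$, i.e.\ $\omega_U\in\omega^\star(H_0)\,O_3$, and since $O_3=\{x:\|x\|_1\le1\}$ is convex the convex hull is contained as well.

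For the reverse inclusion it suffices to realize each of the six vertices $\pm\omega^\star(H_0)\,e_i$ exactly as some $\omega_U$, since $\omega^\star(H_0)\,O_3$ is their convex hull. Recall from Proposition~\ref{prop:induced-vf} that with $U=V\otimes W$ the three components of $\omega_U$ are the imaginary parts $\Im(\tilde a_i\tilde b_i)$ of the products of the off-diagonal entries $\tilde a=((V^*AV)_{32},(V^*AV)_{13},(V^*AV)_{21})$ and the analogous $\tilde b$. To hit the vertex along $e_i$ I would (a) conjugate $A$ by a permutation so that its extreme eigenvalues $\ell_1(A),\ell_3(A)$ occupy the two indices belonging to the $i$-th off-diagonal slot, (b) apply the $2\times2$ Hadamard-type mixing of Lemma~\ref{lemma:2by2} on exactly that index pair, making the corresponding entry equal to $\tfrac12(\ell_1(A)-\ell_3(A))$ while leaving the other two off-diagonal entries zero (the untouched index stays diagonal), and (c) fix its phase with a diagonal phase gauge. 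Doing the same for $B$ with $W$, the product $\tilde a_i\tilde b_i$ has modulus $\tfrac14(\ell_1(A)-\ell_3(A))(\ell_1(B)-\ell_3(B))=\omega^\star(H_0)$ and the other two components vanish; choosing the phases so that $\arg\tilde a_i+\arg\tilde b_i=\pm\pi/2$ makes this product $\pm\iu\,\omega^\star(H_0)$, hence $\omega_U=\pm\omega^\star(H_0)\,e_i$. Ranging over $i\in\{x,y,z\}$ and both signs produces all six vertices, proving equality.

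For the travel-time statement I would use that the octahedron is squeezed between the Euclidean balls of radius $\omega^\star(H_0)/\sqrt3$ (its inradius, the distance to a face $\omega_x+\omega_y+\omega_z=\omega^\star(H_0)$) and $\omega^\star(H_0)$ (its circumradius). Joining $\sigma$ and $\tau$ by the single rotation about the geodesic axis $\hat n=(\sigma\times\tau)/|\sigma\times\tau|$, the largest admissible angular speed is $\omega^\star(H_0)/\|\hat n\|_1\in\omega^\star(H_0)[1/\sqrt3,1]$ since $\|\hat n\|_1\in[1,\sqrt3]$ for a unit vector; as $\hat n\perp\sigma(t)$ along the geodesic, the arc-length rate equals this speed, so the geodesic is traversed in time $\arccos(\sigma\cdot\tau)\,\|\hat n\|_1/\omega^\star(H_0)$, which lies in the asserted interval. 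Conversely $|\dot\sigma|=|\omega\times\sigma|\le\|\omega\|_2\le\|\omega\|_1\le\omega^\star(H_0)$, so no path can beat time $\arccos(\sigma\cdot\tau)/\omega^\star(H_0)$, pinning the optimal time to that interval.

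\emph{The main obstacle} is the vertex construction: one must see that permuting the spectrum lets any index pair carry the full extremal gap $\tfrac12(\ell_1-\ell_3)$, that mixing only that pair annihilates the other two components, and that a diagonal phase gauge independently tunes each factor's phase so that the \emph{imaginary part} — rather than the modulus — of the product is extremal and of either sign. Everything else (the inradius computation and the speed bound) is routine.
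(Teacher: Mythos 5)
Your proof is correct and follows the paper's own route step for step: the inclusion $\subseteq$ from Proposition~\ref{prop:octahedral-bound}, the six vertices realized by the Hadamard-type mixing of Lemma~\ref{lemma:2by2} (your explicit diagonal phase-gauge step, needed because the components of $\omega_U$ are \emph{imaginary parts} and the mixing alone produces a real product, is precisely the detail the paper's ``in a manner similar to the proof of Lemma~\ref{lemma:2by2}'' leaves implicit), and the time bounds from the inscribed and circumscribed spheres of the octahedron, which you spell out via the geodesic-axis rotation. One caveat: your geodesic time $\arccos(\sigma\cdot\tau)\,\|\hat n\|_1/\omega^\star(H_0)$ lies in $\tfrac{\arccos(\sigma\cdot\tau)}{\omega^\star(H_0)}[1,\sqrt3]$, not in the literally displayed interval $\omega^\star(H_0)\arccos(\sigma\cdot\tau)[1/\sqrt3,1]$; the latter is a typo in the statement (time must scale inversely with the coupling strength, and indeed the paper's own example in Section~\ref{sec:lift} with $\omega^\star=1$ gives $T^\star=\sqrt2\arccos(1/\sqrt3)$, which lies in $[1,\sqrt3]\arccos(1/\sqrt3)$ but not in $[1/\sqrt3,1]\arccos(1/\sqrt3)$), so your derivation yields the intended bounds rather than the misprinted ones.
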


\begin{proof}
The inclusion $\subseteq$ follows immediately from Proposition~\ref{prop:octahedral-bound}. To get equality one just has to obtain the vertices of the octahedron, which can be done in a manner similar to the proof of Lemma~\ref{lemma:2by2}.
The bounds on $T$ follow immediately from the fact that the spheres of radius $1/\sqrt3$ and $1$ are respectively the inscribed and circumscribed spheres of the regular octahedron. See also~\cite[Sec.~3.2]{BipartiteReduced24}.
\end{proof}
\noindent Recall from Remark~\ref{rmk:different-reduced} that one can define a relaxed control system via $\dot\sigma(t)\in\conv(\mf{H}\sigma)$, which is still approximately equivalent to~\eqref{eq:reduced}. For this reason we will work with the relaxed system when we derive time-optimal solutions in the following section.

\subsection{Solving the reduced problem} \label{sec:pmp}

By rescaling we may assume without loss of generality that $\omega^\star=1$ and hence the reduced control system becomes 
\begin{align*}
\dot\sigma(t) = u(t)\times\sigma(t), 
\quad\sigma(0)=\sigma_0\in S^2
\end{align*}
where the control function $u:[0,T]\to\R^3$ is measurable and satisfies 
\begin{equation} \label{eq:constraint}
\|u(t)\|_1=|u_x(t)|+|u_y(t)|+|u_z(t)|\leq 1 
\end{equation}
for almost all $t\in[0,T]$.

\smallskip The time-optimal control problem can be solved using the Pontryagin Maximum Principle (PMP) \cite{Pont86,Agrachev04,Sachkov22}.
An introduction to the PMP in the context of quantum control theory is given in~\cite{Boscain21}. 
The PMP is a first-order necessary condition satisfied by optimal trajectories. 
We introduce an \emph{adjoint state} $p\in\R^3$ and define the \emph{pseudo-Hamiltonian} $H_{\sf p}(u,\sigma,p)=p\cdot(u\times\sigma)$.
The dynamics of $\sigma$ and $p$ follow the Hamilton--Jacobi equations, in particular $\dot{p}=-\partial H_{\sf p}/\partial\sigma=u\times p$.
For this kind of control system on the sphere, we can define the variable%
\footnote{Strictly speaking $p$ is a cotangent vector $p\in T^*_\sigma S^2$. When considered as a vector in $\R^3$, it is therefore restricted to be orthogonal to $\sigma$ at all times. Thus the relation between $p$ and $l$ is bijective.}
$l=\sigma\times p$ (note that this implies that $l$ is orthogonal to $\sigma$) that allows us to express the pseudo-Hamiltonian in the convenient form:
\begin{equation}
\tilde H_{\sf p}(u,l)=u\cdot l.\label{eq:pseudoH}
\end{equation}
We can show using the Hamilton--Jacobi equations that the dynamics of $l$ follow:
\begin{equation} \label{eq:dynamicl}
\dot l
= \dot\sigma\times p+\sigma\times\dot p
=(u\times\sigma)\times p+(p\times u)\times\sigma
=u\times l.
\end{equation}
The PMP states (see~\cite[Thm.~5]{Boscain21}) that the time-optimal control has to maximize this pseudo-Hamiltonian under the constraint~\eqref{eq:constraint}. 
For this we determine which values of $u$ (under the given constraint) maximize $\tilde H_{\sf p}(u(t),l(t))$ for given $l$: 

\begin{lemma} \label{lemma:opt-u}
Given $l\in\R^3$, the vertex $\pm e_i$ of $O_3$ maximizes $\tilde H_{\sf p}(u,l)=u\cdot l$ if and only if $\pm l_i=\max(|l_x|,|l_y|,|l_z|)$. The set of all maximizers is then simply the convex hull of such vertices, and defines a face of the octahedron.
\end{lemma}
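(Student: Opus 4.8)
The plan is to recognize the constraint set as the unit ball of the $\ell^1$-norm and then invoke the standard fact that a linear functional on a polytope attains its maximum on a face spanned by the maximizing vertices. Concretely, the control constraint~\eqref{eq:constraint} says $\|u\|_1\le1$, so the admissible controls form exactly the regular octahedron $O_3=\conv(\pm e_x,\pm e_y,\pm e_z)$, whose extreme points are the six vertices $\pm e_i$. Maximizing $\tilde H_{\sf p}(u,l)=u\cdot l$ from~\eqref{eq:pseudoH} is therefore a linear program over $O_3$.

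First I would evaluate the objective at the vertices: $(\pm e_i)\cdot l=\pm l_i$. Hence the largest value attained over all vertices is $\max_i\max(l_i,-l_i)=\max(|l_x|,|l_y|,|l_z|)$, and a given vertex $\pm e_i$ realizes this maximum precisely when $\pm l_i=\max(|l_x|,|l_y|,|l_z|)$, which is the claimed characterization. By $\ell^1$--$\ell^\infty$ duality this common value is also the maximum of $u\cdot l$ over the whole octahedron, so the maximizing vertices are genuine global maximizers and not merely vertex-restricted ones.

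Second, I would argue that the full set of maximizers is the convex hull of these maximizing vertices. This is the standard polytope fact that the locus where a linear functional attains its maximum over a polytope is a face, and such an exposed face is the convex hull of the vertices it contains. One can see this directly: writing any $u\in O_3$ as a convex combination $u=\sum_k\lambda_k w_k$ of vertices $w_k$, one has $u\cdot l=\sum_k\lambda_k(w_k\cdot l)\le\max(|l_x|,|l_y|,|l_z|)$ with equality if and only if $\lambda_k=0$ for every vertex $w_k$ with $w_k\cdot l<\max(|l_x|,|l_y|,|l_z|)$; thus equality holds exactly when $u$ is a convex combination of maximizing vertices.

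The computation is entirely routine; the only point requiring a little care is the tie-breaking, which is precisely what ``defines a face of the octahedron'' encodes. When $\max(|l_x|,|l_y|,|l_z|)$ is attained in a single coordinate with a single sign the maximizer is a unique vertex (a $0$-dimensional face), whereas when several of the quantities $\pm l_i$ equal this value the maximizing face has positive dimension, and in the degenerate case $l=0$ every admissible control is optimal so that the face is all of $O_3$. Keeping these cases straight rather than performing any hard estimate is the main (and mild) obstacle, so I would simply state the convex-hull/face description cleanly and let the vertex evaluation above do the work.
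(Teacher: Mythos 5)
Your proposal is correct and follows essentially the same route as the paper's proof: bound $u\cdot l$ by $\max(|l_x|,|l_y|,|l_z|)$ via the $\ell^1$ normalization, evaluate at the vertices, and invoke the fact that a linear functional on a convex polytope is maximized exactly on a face, i.e.\ on the convex hull of the maximizing vertices. The only difference is that you spell out the convex-combination argument and the degenerate case $l=0$ explicitly, which the paper leaves as standard.
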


\begin{proof}
Due to the normalization of $u$ it is clear that $u\cdot l\leq\max(|l_x|,|l_y|,|l_z|)$. 
Moreover, since the maximization is linear and takes place on the octahedron, which is a convex polytope, the maximum is achieved exactly on a face (which may be a vertex, an edge, or a facet) of the octahedron, and hence defined by a subset of vertices. 
It is easy to see that the vertices which maximize $u\cdot l$ are exactly the ones given in the statement, and hence the set of maximizers is the convex hull of these vertices. 
\end{proof}

\begin{remark} \label{rmk:uopt}
This result can be visualized intuitively by considering the polar dual of the octahedron $O_3$, which is nothing but the cube with vertices $(\pm1,\pm1,\pm1)$ denoted $C_3$.
If $l$ lies on a certain face of the cube, then there is a unique corresponding face of the octahedron containing all $u$ which maximize $H_{\sf p}=u\cdot l$.
Additionally it turns out that only the barycenters of the faces of the octahedron yield relevant values of $u$. 
This will become clear later when we describe all possible evolutions of $l$.
The duality and the barycenters are shown in Figure~\ref{fig:octahedron}.
\end{remark}

\begin{figure}[th]
\includegraphics[scale=0.3]{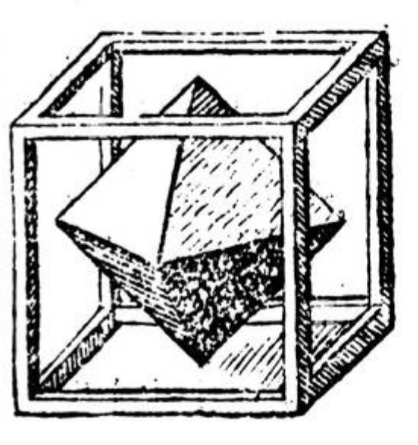}
\hspace{1cm}
\includegraphics[scale=0.3]{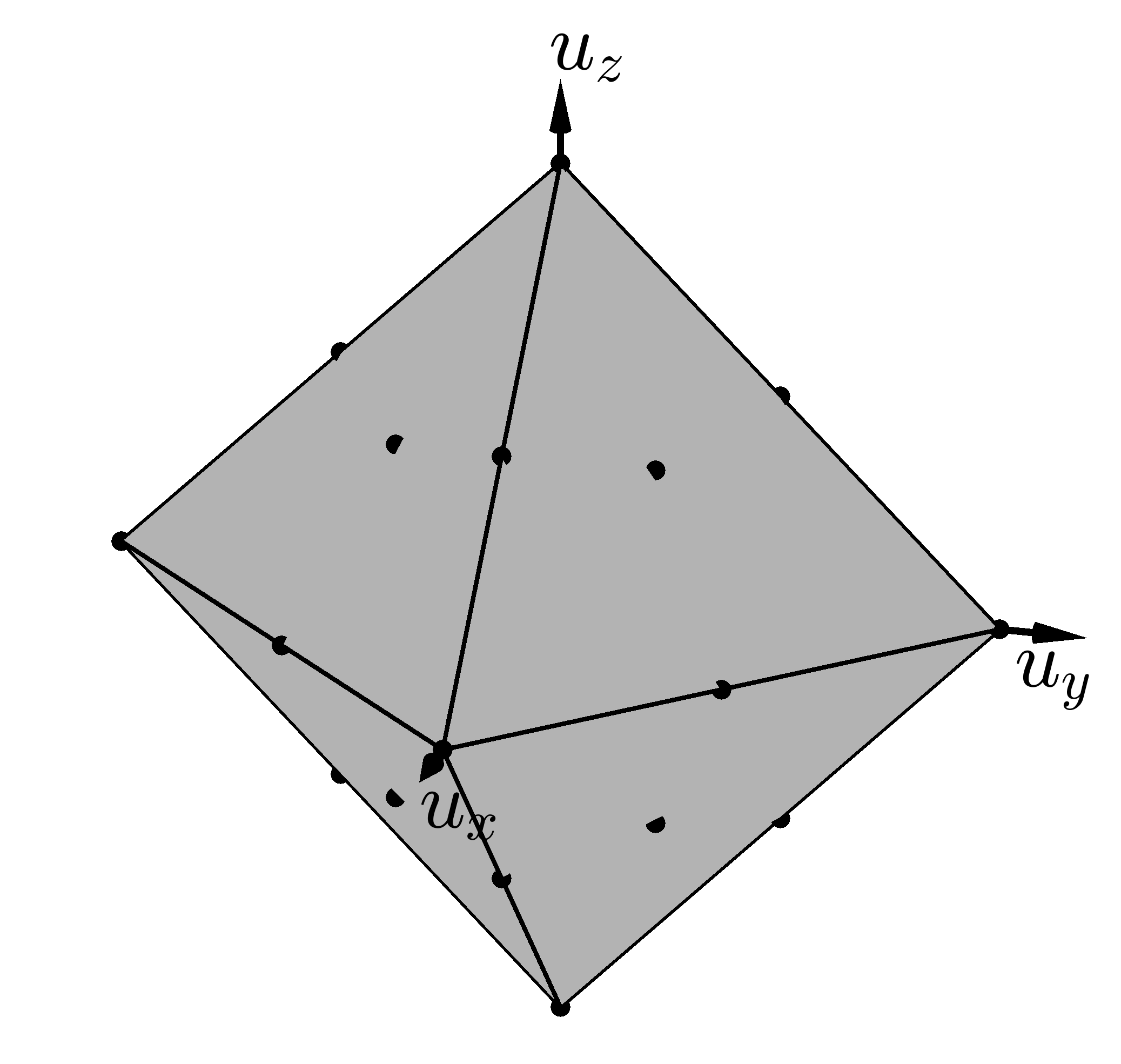}
\caption{
Left: The octahedron $O_3$ and its polar dual, the cube $C_3$. Reproduced from~\cite{Kepler}.
Note that the $d$-dimensional faces of $O_3$ correspond one-to-one with $(2-d)$-dimensional faces of $C_3$, that is, vertices correspond to facets and edges to edges.
Right: The control $u$ can take values at the black points which are the barycenters of the faces of the octahedron $O_3$, cf.\ Remark~\ref{rmk:uopt}.}
\label{fig:octahedron}
\end{figure}

The dynamics of $l$ are given by~\eqref{eq:dynamicl} and constrained by Lemma~\ref{lemma:opt-u}.
Since the optimal value of $u$ is not always uniquely defined, this yields a differential inclusion instead of a differential equation.
Thus the solution is in general not uniquely determined by the initial condition $l(0)$. 
However we will see that unique solutions can still be obtained for the optimal control problem.

To understand the evolution of $l$, note that it has two constants of motion. 
The first one is due to the fact that $l$ moves on a sphere (recall that $\dot l=u\times l$), and the second one, referred to as the \emph{Pontryagin Hamiltonian}, is obtained by substituting any optimal $u$ in the pseudo-Hamiltonian~\eqref{eq:pseudoH}. 
They are given by:
\begin{equation} \label{eq:HL2FirstInt}
L^2=|l_x|^2+|l_y|^2+|l_z|^2, \quad
\mathcal{H}=\max\{|l_x|,|l_y|,|l_z|\}.
\end{equation}
The first equation corresponds to a sphere of radius $L$ which can be set to $1$ without loss of generality, and the second to a cube of side length $2\mathcal{H}$. 
Any solution of the system must remain on to the intersection of these two surfaces, illustrated in Fig.~\ref{fig:spherecube}.

\begin{figure}[th]
\includegraphics[scale=0.25]{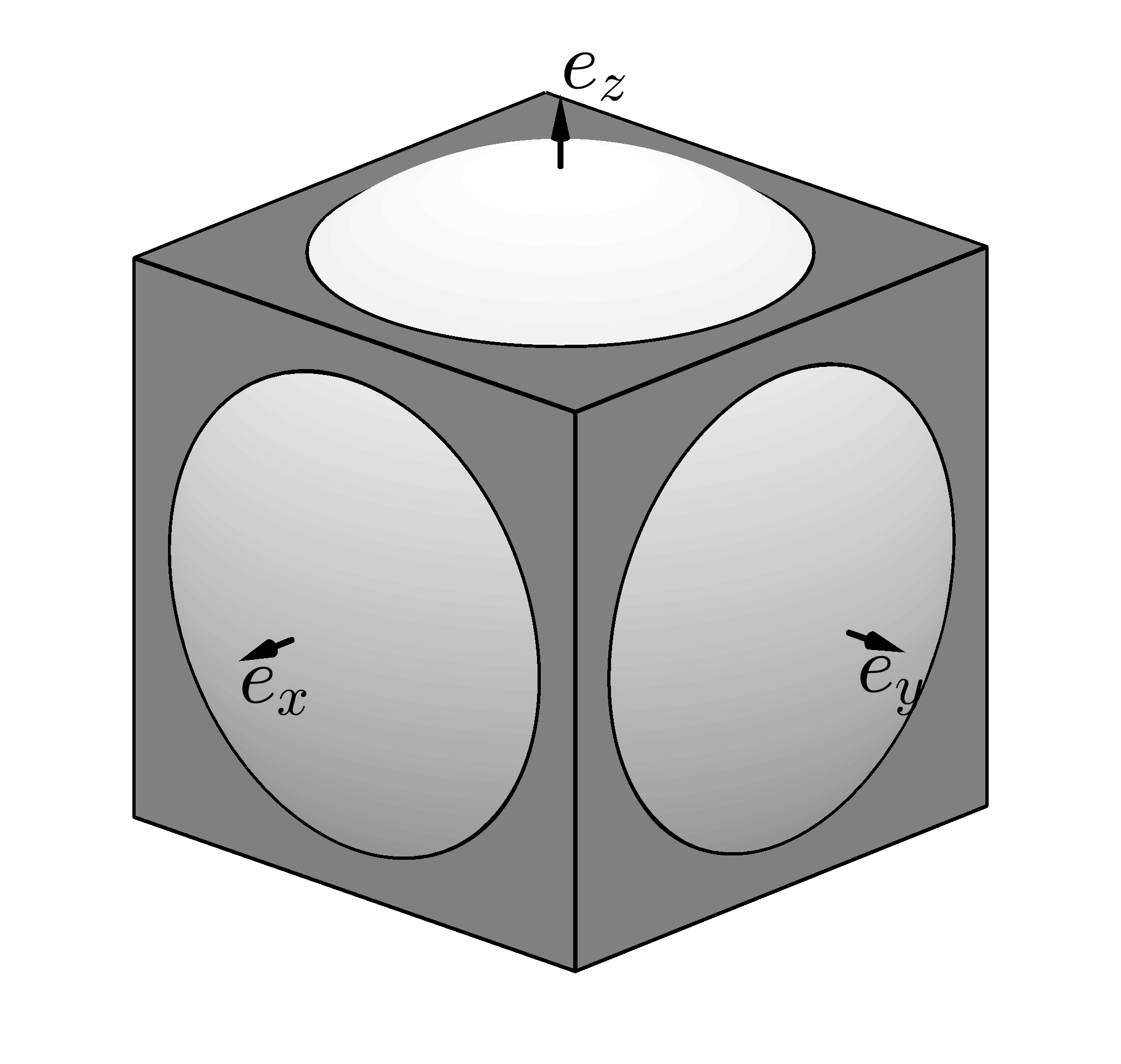}
\includegraphics[scale=0.25]{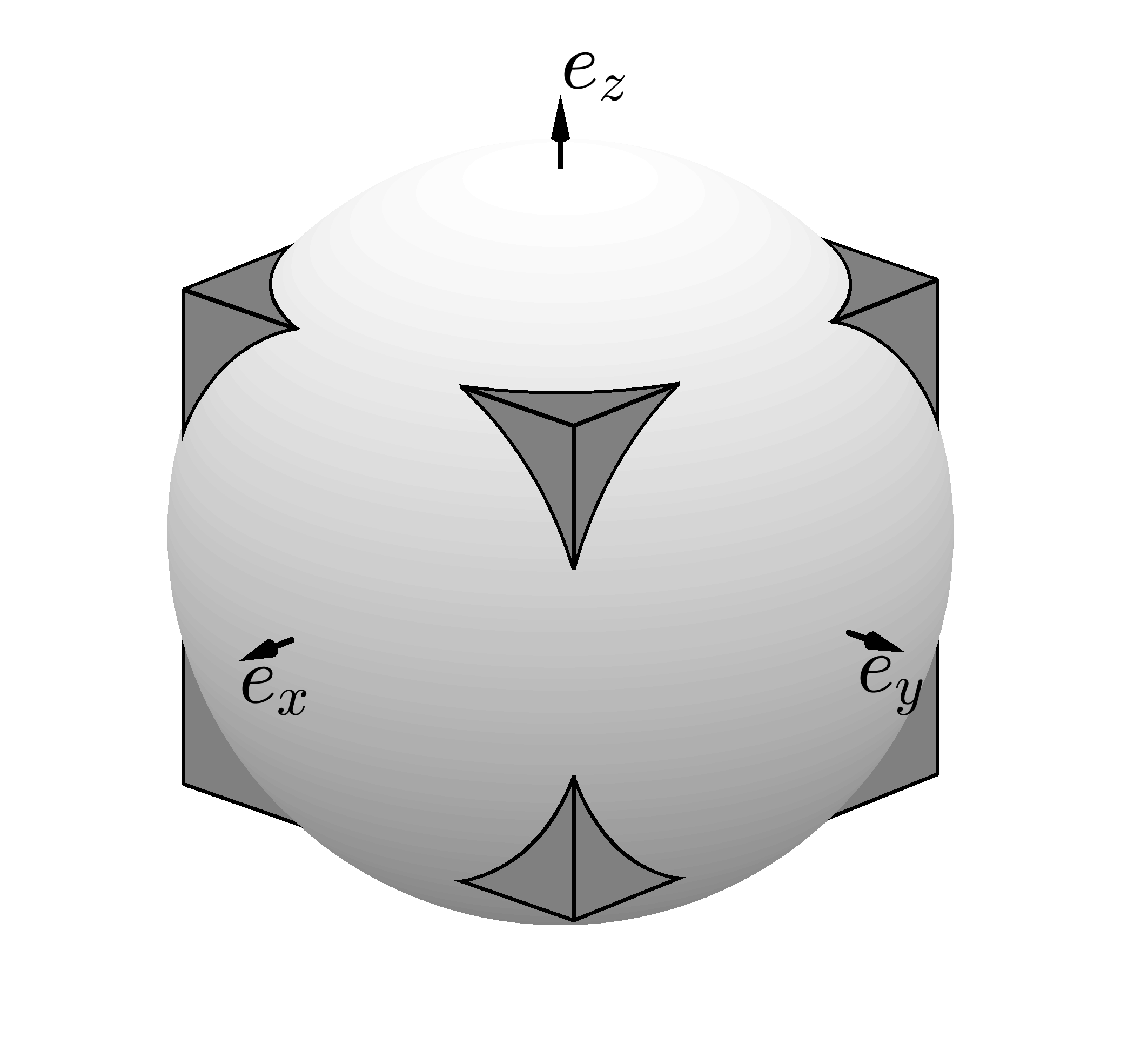}
\includegraphics[scale=0.25]{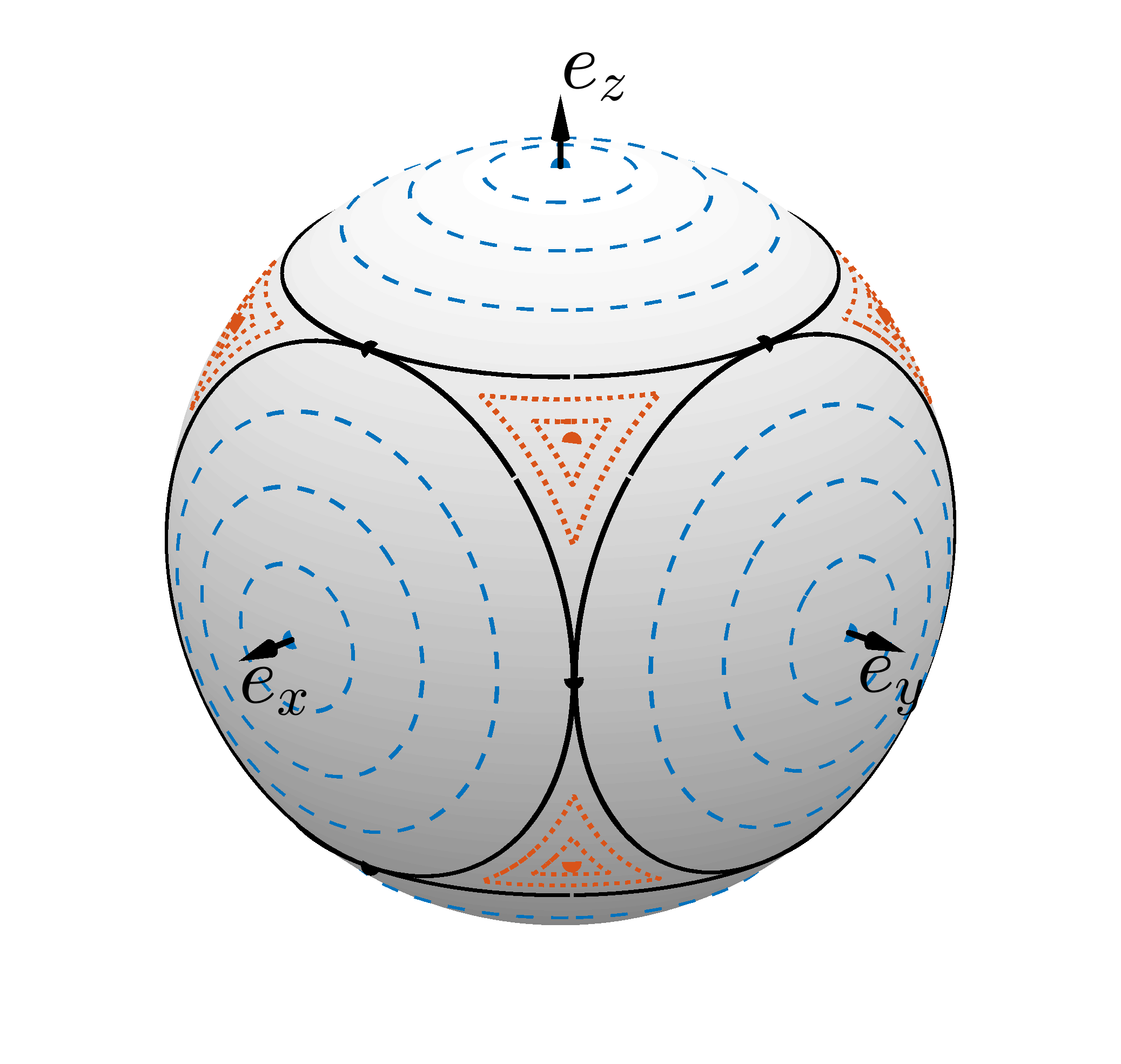}
\caption{Geometric representation of the two first integrals given in Eq.~\eqref{eq:HL2FirstInt}. 
The solution $l(t)$ lies on the intersection of these two surfaces. We obtain different families shown on the rightmost panel, namely the constant case (blue dashed lines) for $\mathcal{H}\in(\tfrac1{\sqrt2},1]$, the switching case (dotted red lines) for $\mathcal{H}\in[\tfrac1{\sqrt3},\tfrac1{\sqrt2})$, and the separatrix (black line) for $\mathcal{H}=\tfrac1{\sqrt2}$.}
\label{fig:spherecube}
\end{figure} 

The general solution $l(t)$ can be decomposed into two main families, namely the \emph{switching case} and the \emph{constant case}, depending on $\mathcal{H}$ (which depends on $l(0)$). 
In the constant case, one of the components of $l$ is always dominant. 
These curves imply constant controls. 
In the switching case, $l$ is made by concatenation of three circular arcs. 
In the positive octant, the control jumps between $u_x\rightarrow u_z\rightarrow u_y\rightarrow u_x\rightarrow\cdots$, with a duration $\Delta t=\pi/2-2\arccos(\mathcal{H}/\sqrt{1-\mathcal{H}^2})$ between two switches.
For $\mc H=\tfrac1{\sqrt3}$ the switching solution degenerates into a solution with constant controls of the form $u=(\pm\tfrac13,\pm\tfrac13,\pm\tfrac13)$, corresponding to the barycenters of the facets of the octahedron of Fig.~\ref{fig:octahedron}.
A special case is the black curve separating these two families. 
If $ l(0)$ starts somewhere on this curve, it follows it for a while until it reaches one of the unstable equilibrium points (in black).
It can stay on this point for a certain amount of time and then follow any of the trajectories connected to this point. These dynamics correspond to a control that is originally such that, for example, $u=(1,0,0)$ during a certain time and then switches to, e.g, $u=(\tfrac12,\tfrac12,0)$ (if the unstable point is in the $xy$-plane) and stays for a certain time. It can thus continue with $u=(1,0,0)$, or switch to $u=(0,1,0)$. 
The time it stays on the unstable equilibrium depends on the trajectory $\sigma(t)$ in the state space that one wants to achieve, and in particular on the desired final state.

So far we have considered the reduced control system on the entire sphere.
However, since the singular values of the quantum state are only defined up to order and sign, there is an additional symmetry, and we may focus on the part of the sphere where $\sigma_z\geq\sigma_y\geq\sigma_x\geq0$. 
This is called the \emph{Weyl chamber} and illustrated in Figure~\ref{fig:weyl}.
Indeed, for any solution of the reduced control system one can consider the corresponding path in the Weyl chamber obtained by taking the absolute value of the singular values and ordering them appropriately, and this is also guaranteed to be a solution, see~\cite[Prop.~A.4]{Reduced23}.

\begin{figure}
\centering
\includegraphics[width=0.45\textwidth,trim=2cm 5cm 2cm 4cm,clip]{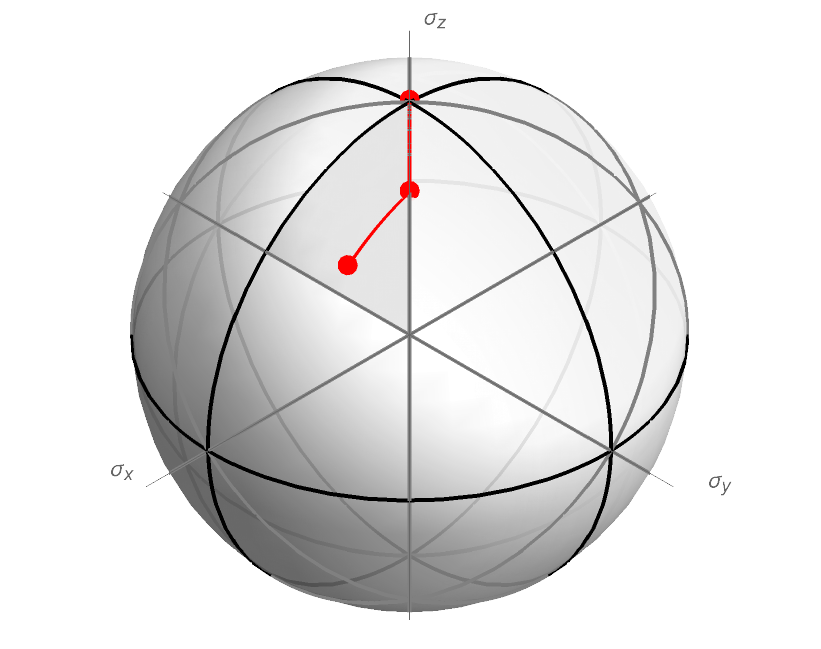}
\caption{The singular values of the quantum state are only defined up to order and sign. On the Schmidt sphere, coordinate permutations and sign flips act via reflections. 
These reflections divide the sphere into small triangular regions, which can be mapped to each other uniquely and are hence equivalent. 
We choose one of these triangles (namely the one satisfying $\sigma_z\geq\sigma_x\geq\sigma_y\geq0$) as state space for the reduced control system and call it the \emph{Weyl chamber}.
In red we show the optimal solution starting from the north pole of the Schmidt sphere and ending at an arbitrary point in the Weyl chamber as obtained from the PMP in Section~\ref{sec:pmp}.
}
\label{fig:weyl}
\end{figure}

Consider a solution $\sigma$ starting at the north pole $\sigma(0)=\sigma_0:=(0,0,1)$ and let $\tau$ denote any desired final state in the Weyl chamber.
Then it holds that $l(0)$ lies in the $xy$-plane.
The only possible solution steering $\sigma$ to $\tau$ which satisfies the PMP and remains within the Weyl chamber is to use $u=(-\tfrac12,\tfrac12,0)$ for time $T_1$ and then to use $u=(0,1,0)$ for time $T_2$ where
$$
T_1=\sqrt2\arccos\big(\sqrt{1-2\tau_y^2}\big), \qquad
T_2=\arccos\bigg(\frac{\tau_x\tau_y+\tau_z\sqrt{1-2\tau_y^2}}{1-\tau_y^2}\bigg).
$$ 
In particular if the final state is $(\tfrac1{\sqrt3},\tfrac1{\sqrt3},\tfrac1{\sqrt3})$ (corresponding to a maximally entangled state), then $T_1=\sqrt2\arccos(\tfrac1{\sqrt3})$ and $T_2=0$.
In the following section we will further investigate this solution.

\subsection{Lifting the solution} \label{sec:lift}

In the final step we derive (approximately) optimal controls for the full control system~\eqref{eq:bilinear} consisting of two coupled qutrits.
For a regular solution $\sigma:[0,T]\to S^2$ this can be done exactly (up to some near instantaneous pulses at the beginning and at the end of the solution) using~\cite[Prop.~2.13]{BipartiteReduced24}.
For non-regular solutions the existence of an exact lift is not guaranteed.
In this section we derive a lifted solution starting at the product state $\ket{\psi_0}=\ket{33}$ and finishing at (or rather arbitrarily close to) the maximally entangled state $\ket{\psi_1}=\tfrac1{\sqrt3}(\ket{11}+\ket{22}+\ket{33})$.


\smallskip In the previous section we have seen that the optimal solution in the reduced control system starting at $(0,0,1)$ and ending at $\tfrac1{\sqrt3}(1,1,1)$ is simply given by a segment of the corresponding great circle traversed at angular velocity 
$\omega^\star/\sqrt2$, where $\omega^\star$ is given in Section~\ref{sec:reduce} and depends on the coupling Hamiltonian $H_0=A\otimes B$.
To simplify the notation we assume (without loss of generality) that the eigenvalues of $A$ and $B$ are $1,0,-1$ and hence $\omega^\star=1$.
Concretely the optimal solution is given by
$$\sigma(t)=\big(\tfrac1{\sqrt2}\sin(\tfrac{1}{\sqrt2}t), \tfrac1{\sqrt2}\sin(\tfrac{1}{\sqrt2}t), \cos(\tfrac{1}{\sqrt2}t)\big),$$
and the maximally entangled state is reached at time $T^\star=\sqrt2\arccos(\tfrac1{\sqrt3})$.
The optimal derivative in the reduced control system is achieved by the generator $-H_U$ using any local unitary $U=V\otimes W$ which satisfies
\begin{equation}
V^*AV = \frac1{\sqrt2}\begin{pmatrix}
0&0&1\\0&0&1\\1&1&0
\end{pmatrix}, \quad
W^*BW = \frac1{\sqrt2}\begin{pmatrix}
0&0&-\iu\\0&0&-\iu\\\iu&\iu&0
\end{pmatrix}.
\end{equation}
Such $U$ exists by assumption on the eigenvalues of $A$ and $B$.
Without applying any controls (other than instantaneously applying $U$ and $U^*$ and the beginning and the end respectively) the evolution of the system is
$$
e^{-\iu U^* A\otimes BUt}\ket{33}
=
-\tfrac{\sin(t)}2 (\ket{11}+\ket{12}+\ket{21}+\ket{22}) + \cos(t)\ket{33},
$$
which, unfortunately, is not a lift of the optimal solution as the singular values are $(\sin(t),0,\cos(t))$.
Indeed, there is no compensating Hamiltonian which yields an exact lift. 
This is because the local control Hamiltonian applied to the system cannot directly affect the derivative of the singular values at time $0$ (see~\cite{Diag22}).
For more background on this issue see~\cite[Sec.~3.1]{Reduced23}.
The origin of the problem is that all states on the solution satisfy $\sigma_x=\sigma_y$, and hence the usual formula for the compensating Hamiltonian does not apply (and might blow-up). 
This will be made clearer in the following.


\smallskip One way to fix this problem is to remain on a path which narrowly avoids the degeneracy $\sigma_x=\sigma_y$ and to compute the corresponding compensating Hamiltonian.
We will consider a solution which remains on the circular path satisfying $\sigma_x=\sigma_y+\varepsilon\sqrt2$ for some small $\varepsilon>0$.
Moreover we use the same generator $-H_U$ with local control unitary $U=V\otimes W$ as above.
As in Section~\ref{sec:opt-ctrl-angle} we can derive the corresponding local compensating Hamiltonian $E\otimes\id+\id\otimes F$.
We obtain 
$$
V^*EV=W^*FW=\frac{\sigma_z}{2\sqrt2\varepsilon} P_y',\quad 
\text{where }
P_y'=
\begin{pmatrix}
0&-\iu&0\\
\iu&0&0\\
0&0&0\\
\end{pmatrix},
$$
which blows up as $\varepsilon$ approaches $0$, but is well-behaved for $\varepsilon\neq0$.
Note also the state dependence via $\sigma_z$.
More precisely, consider the initial state $\tfrac{\varepsilon}{\sqrt2}\ket{11}-\tfrac{\varepsilon}{\sqrt2}\ket{22}+\sqrt{1-\varepsilon^2}\ket{33}$.
Since the reduced solution moves on a circle orthogonal to the $(y-x)$-axis, the reduced solution is given by
$$
\big(\tfrac{\sqrt{1-\varepsilon^2}\sin(t/\sqrt2)+\varepsilon}{\sqrt2}, 
\tfrac{\sqrt{1-\varepsilon^2}\sin(t/\sqrt2)-\varepsilon}{\sqrt2}, 
\sqrt{1-\varepsilon^2}\cos(t/\sqrt2)\big).
$$
Hence, the control Hamiltonian can be written in the time-dependent (instead of state-dependent) way using $V^*EV=W^*FW=\tfrac{\sqrt{1-\varepsilon^2}\cos(t/\sqrt2)}{2\sqrt2\varepsilon}P_y'$.


\smallskip It is also interesting to consider what happens if one applies this control to a solution starting at $U\ket{33}$.
To simplify things we get rid of the time dependence and use the control Hamiltonian given by $V^*EV=W^*FW=\tfrac{1}{2\sqrt2\varepsilon}P_y'$.
Some solutions of this form are shown in Figure~\ref{fig:sols}.
These solutions do not exactly reach a maximally entangled state, but for small $\varepsilon$ they get very close.
Indeed, consider the cost function
$$
C(\varepsilon) = \|\mathrm{sing}(\ket{\psi(T^\star)})-\tfrac{1}{\sqrt3}(1,1,1)\|_2
$$
measuring the Euclidean distance of the reduced state at the final time to the maximally entangled state.
This function is plotted in Figure~\ref{fig:error} and has some interesting properties.
In particular, by trial and error one finds that the function becomes almost exactly periodic when transformed as
$$
\tilde C(x) = \sqrt3 C\big(\tfrac1{2\sqrt2x}\big)(2\sqrt2 x-1).
$$
This allows us to derive a simple approximation for the local minima of $C(\varepsilon)$.
Indeed, the points
\begin{equation} \label{eq:opt-epsila}
\varepsilon_k = (2\sqrt2(x_0+k\Delta x))^{-1}, \quad \text{where } x_0\cong 0.0048,\,\Delta x\cong 2.3252, 
\end{equation}
are close to local minima of $C$ for integer $k$ and close to local maxima for half-integer $k$. 
In practice one should always choose $\varepsilon$ in such a local minimum as it can significantly decrease the final cost.
Indeed, a well chosen $\varepsilon$ achieves a cost similar to that of a badly chosen $\varepsilon$ which is $\frac{1+1/\sqrt3}{1-1/\sqrt3}\cong 3.7$ times smaller.

\begin{figure}[th]
\centering
\includegraphics[width=0.4\textwidth,trim=1cm 5cm 2cm 1.5cm,clip]{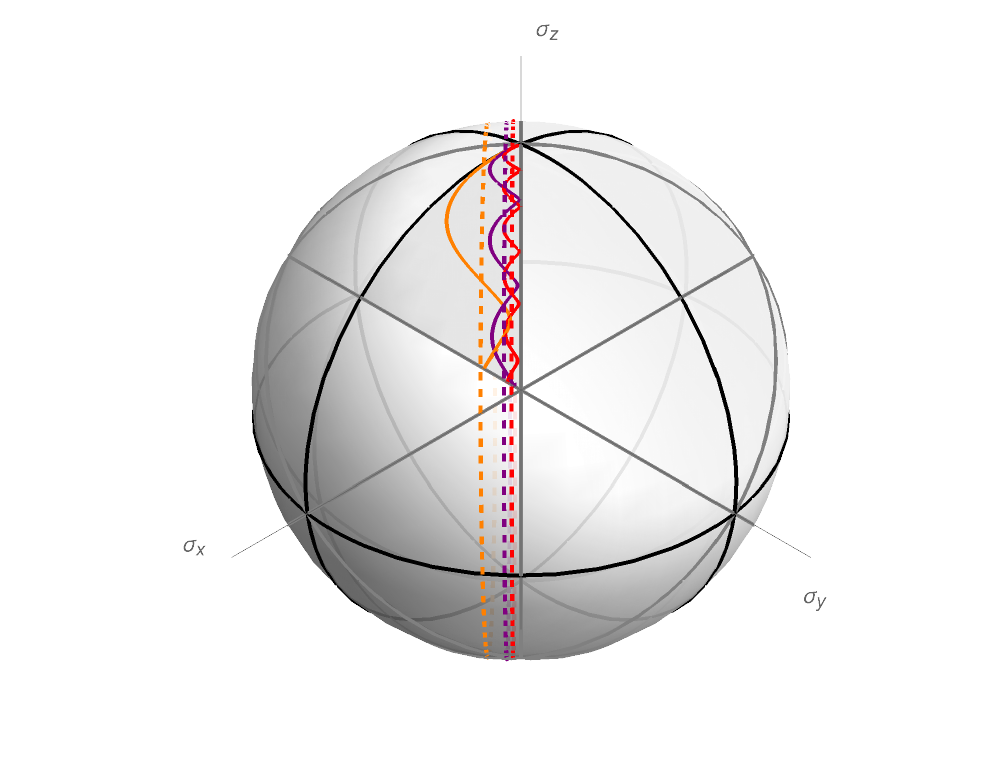}
\includegraphics[width=0.4\textwidth,trim=0cm 3cm 0cm 0cm,clip]{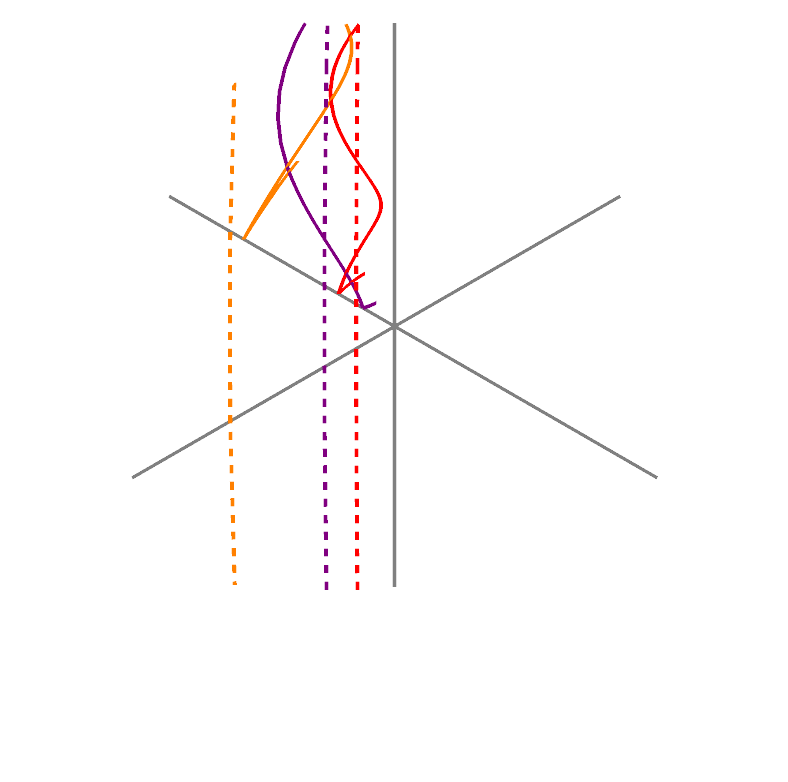}
\caption{Three solutions, with different values of $\varepsilon$, to the optimial control problem projected to the singular values are depicted. 
In each case the solution (solid) is drawn with the circle satisfying $x-y=\varepsilon\sqrt2$ (dashed).
In each case the solution starts at $\ket{33}$ and runs for time $T^\star$ to approximate a maximally entangled state.
The values for $\varepsilon$ chosen are $\varepsilon_1=0.12$ (orange), $\varepsilon_2=0.0506$ (purple), and $\varepsilon_3=0.0276$ (red).
The final distances are $C(\varepsilon_1)=0.140, C(\varepsilon_2)=0.0215,$ and $C(\varepsilon_3)=0.0436$.
Importantly, even though $\varepsilon_2>\varepsilon_3$, the former solution achieves a better result, since the value of $\varepsilon_2$ is chosen such that it achieves a local minimum of the final distance function.
}
\label{fig:sols}
\end{figure}

\begin{figure}
\centering
\includegraphics[width=0.4\textwidth]{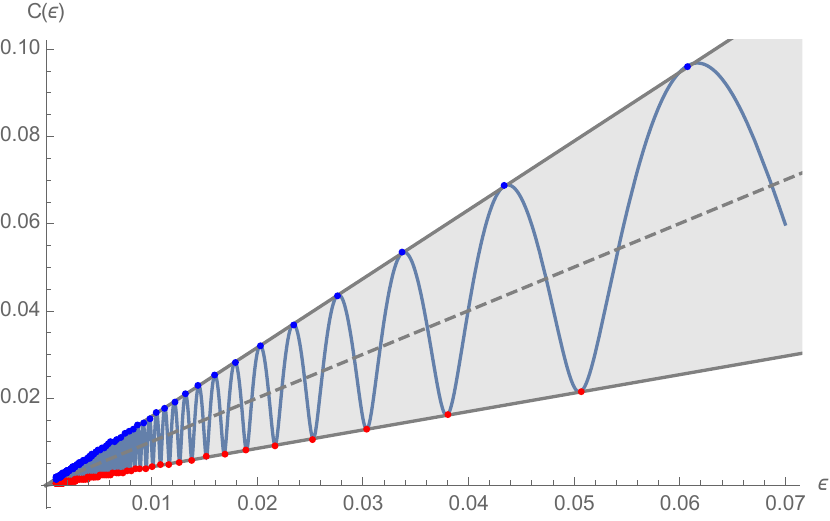}
\includegraphics[width=0.4\textwidth]{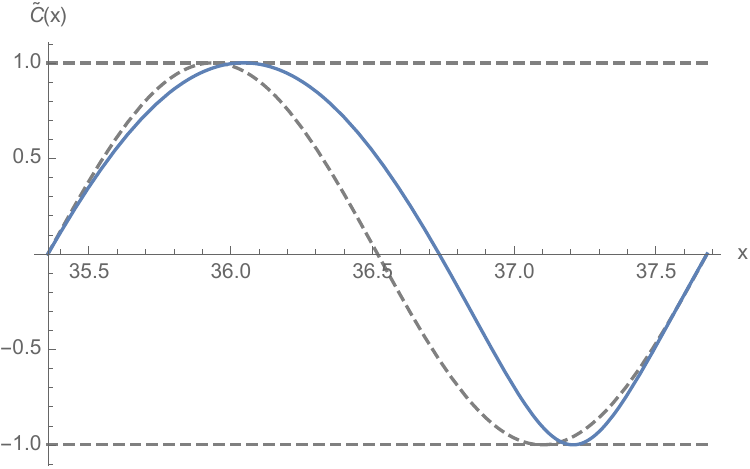}
\caption{
Left: Plot of the cost function $C(\varepsilon)$.
The gray dotted line is $\varepsilon\mapsto\varepsilon$, and the upper and lower gray lines are $\varepsilon\mapsto(1\pm1/\sqrt{3})\varepsilon$.
The (approximate) maxima and minima computed using~\eqref{eq:opt-epsila} are highlighted.
Thus the cost decreases linearly with $\varepsilon$, but, as also shown in Figure~\ref{fig:sols} the exact choice of $\varepsilon$ can make a significant difference.
Right:
The transformed cost $\tilde C(x)$ is almost perfectly periodic (for $x$ large enough and hence $\varepsilon$ small enough). 
Here we depict a single period of $\tilde C(x)$ (blue) alongside a pure sine wave  (gray dashed) for comparison.
}
\label{fig:error}
\end{figure}


\smallskip Finally, once a maximally entangled state is reached, we also wish to stabilize the state.
This can be achieved by switching into the basis where $A,B=\diag(1,0,-1)$.
Using the local compensating Hamiltonian $E\otimes\id+\id\otimes F$ with $V^*EV=W^*FW=\diag(\tfrac12,0,\tfrac12)$ one finds that $\ket{\psi_1}$ is indeed stabilized.

\section{Acknowledgments}

We thank Dominique Sugny for his valuable feedback.
E.M. and L.V.D are supported by the {\em Munich Center for Quantum Science and Technology} (MCQST) and the {\em Munich Quantum Valley} (MQV) with funds from the Agenda Bayern Plus.

\appendix

\section{Decomposition of Drift Hamiltonians} \label{app:ham-decs}

It is most convenient to work with drift Hamiltonians expressed in the form $H_0=\sum_{i=1}^m A_i\otimes B_i$. 
Given an arbitrary drift Hamiltonian $H_0\in\iu\mf{u}(d_1d_2)$ we want to understand the different ways in which it can be written as a sum of decomposable elements $A_i\otimes B_i$. 

It is easy to see that any local term of the form $E\otimes\id$ or $\id\otimes F$ has no effect in the reduced control system. 
Indeed such terms can be determined uniquely:

\begin{remark} \label{rmk:local}
Every Hermitian matrix $H_0\in\iu\mf{u}(d)$ can be written uniquely as $H_0=\tilde H_0+\frac{\tr(H_0)}d\id$ where $\tilde H_0\in\iu\mf{su}(d)$ is a traceless Hermitian matrix.
Indeed, this decomposition is orthogonal with respect to the Hilbert--Schmidt inner product.
For product Hamiltonians $A\otimes B$ we analogously get the unique decomposition into four terms 
$A\otimes B=\tilde A\otimes\tilde B + \frac{\tr(B)}{d_2}A\otimes\id + \frac{\tr(A)}{d_1}\id\otimes B + \frac{\tr(A)\tr(B)}{d_1d_2}\id\otimes\id$.
Since the three latter terms are local, they can be compensated using local unitary control and hence one may for simplicity assume that all Hamiltonians are traceless.
Indeed, while the statements in this section are formulated for general Hermitian matrices, analogous results hold for the traceless case.
This is similar to the unique decomposition of Lindblad generators into coherent and dissipative parts, cf.~\cite{UniqueDecompFvE23}.
\end{remark}

\begin{lemma} \label{lemma:normal-form}
Let $H_0\in\iu\mf{u}(d_1d_2)$ be an arbitrary Hamiltonian and let $A_i$ for $i=1,\ldots,d_1^2$ be a basis of $\iu\mf{u}(d_1)$ and similarly for $B_j$ for $j=1,\ldots,d_2^2$.
Then there is a unique coefficient matrix $C\in\R^{d_1,d_2}$ such that
$$
H_0=\sum_{i,j=1}^{d_1^2,d_2^2} C_{ij} A_i\otimes B_j.
$$
If $A_i=\sum_{k=1}^{d_1^2}S_{ik}\tilde A_k$, $B_j=\sum_{l=1}^{d_2^2}T_{jl}\tilde B_l$ is another choice of bases, with $S\in\R^{d_1,d_1}$ and $T\in\R^{d_2,d_2}$ invertible, then 
$$
H_0=\sum_{k,l=1}^{d_1^2,d_2^2} \tilde C_{kl} \tilde A_k\otimes \tilde B_l \quad \text{where } \tilde C=S^\top C\,T.
$$
As a consequence the rank $r$ of the coefficient matrix is well-defined, i.e.\ it depends only on $H_0$.
Hence the bases $A_i$ and $B_j$ can always be chosen such that
$$
H_0=\sum_{i=1}^r A_i\otimes B_i,
$$
and we will say that this representation is ``in diagonal form''.
\end{lemma}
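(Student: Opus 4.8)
The plan is to reduce all four assertions to a single structural fact: that $\{A_i\otimes B_j\}_{i,j}$ is an $\R$-basis of the space $\iu\mf u(d_1d_2)$ of Hermitian $(d_1d_2)\times(d_1d_2)$ matrices. Each $A_i\otimes B_j$ is Hermitian (since $(A_i\otimes B_j)^*=A_i^*\otimes B_j^*=A_i\otimes B_j$), and there are exactly $d_1^2\cdot d_2^2=(d_1d_2)^2=\dim_\R\iu\mf u(d_1d_2)$ of them, so it suffices to establish $\R$-linear independence. For this I would pass to the complexification: since every complex matrix splits as $M=\tfrac12(M+M^*)+\iu\cdot\tfrac1{2\iu}(M-M^*)$ into two Hermitian pieces, the $\R$-span $\iu\mf u(d_1)$ of the $A_i$ generates all of $\Mat(d_1,\C)$ over $\C$; as there are $d_1^2=\dim_\C\Mat(d_1,\C)$ of them, the $A_i$ form a $\C$-basis of $\Mat(d_1,\C)$, and likewise the $B_j$ of $\Mat(d_2,\C)$. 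The standard tensor-of-bases statement over $\C$ then shows $\{A_i\otimes B_j\}$ is a $\C$-basis of $\Mat(d_1,\C)\otimes_\C\Mat(d_2,\C)\cong\Mat(d_1d_2,\C)$, hence in particular $\C$-linearly independent and a fortiori $\R$-linearly independent. This yields the $\R$-basis claim, so the Hermitian matrix $H_0$ has a unique expansion with real coefficients $C_{ij}$.

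For the transformation law I would substitute $A_i=\sum_k S_{ik}\tilde A_k$ and $B_j=\sum_l T_{jl}\tilde B_l$ into $H_0=\sum_{ij}C_{ij}A_i\otimes B_j$ and collect terms using bilinearity of the Kronecker product, giving $H_0=\sum_{kl}\big(\sum_{ij}S_{ik}C_{ij}T_{jl}\big)\tilde A_k\otimes\tilde B_l$, i.e.\ $\tilde C_{kl}=(S^\top C T)_{kl}$; uniqueness of the expansion in the tilde basis identifies $\tilde C$ as its coefficient matrix. Since $S$ and $T$ are invertible, so are $S^\top$ and $T$, whence $\rank(\tilde C)=\rank(S^\top C T)=\rank(C)$. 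Thus the rank is unchanged under any admissible change of bases and is therefore an invariant $r$ of $H_0$ alone.

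To obtain the diagonal form I would invoke the rank normal form of real matrices: for $C$ of rank $r$ there exist invertible $P,Q$ with $PCQ=\left(\begin{smallmatrix}I_r&0\\0&0\end{smallmatrix}\right)$ (alternatively one may use the real SVD and absorb the singular values into the bases). Taking $S=P^\top$ and $T=Q$, which is permissible since any invertible matrix defines a valid change of basis, makes $\tilde C$ equal to this normal form, so that after relabeling $H_0=\sum_{i=1}^r\tilde A_i\otimes\tilde B_i$.

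The only genuinely substantive step is the $\R$-basis claim of the first paragraph. The subtlety is that $\iu\mf u(d)$ is a \emph{real} vector space, so one cannot apply the tensor-product-of-bases theorem directly over $\R$ (the Kronecker products do not obviously span the real Hermitian matrices); the clean workaround is to complexify to $\Mat(d,\C)=\iu\mf u(d)\oplus\iu\cdot\iu\mf u(d)$, use the well-known complex statement, and then descend via the dimension count. Once this is in place, the transformation law, rank invariance, and diagonal form are routine linear algebra.
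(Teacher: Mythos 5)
Your proof is correct, and it is essentially a fully worked-out version of what the paper disposes of by citation: the paper's entire proof is to invoke a general theorem on tensor products (Roman, Thm.~14.7) and to remark that the claim ``easily follows from the real singular value decomposition and elementary computations.'' Your elementary route supplies precisely the point that makes such a citation legitimate, namely that the Kronecker products $A_i\otimes B_j$ genuinely form an $\R$-basis of the real vector space $\iu\mf{u}(d_1d_2)$. This is not a formal instance of the tensor-of-bases theorem over $\R$, because one must first identify $\iu\mf{u}(d_1)\otimes_\R\iu\mf{u}(d_2)$ with $\iu\mf{u}(d_1d_2)$ via the Kronecker map, and your complexification argument (a Hermitian basis of $\iu\mf{u}(d)$ is automatically a $\C$-basis of $\Mat(d,\C)$, so independence can be checked over $\C$, after which a real dimension count finishes) is a clean way to establish exactly that. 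The remaining steps — the transformation law $\tilde C=S^\top C\,T$ by substitution, rank invariance under invertible $S,T$, and the reduction to diagonal form — coincide with the paper's intended ``elementary computations.'' Your use of the rank normal form $PCQ=\left(\begin{smallmatrix}I_r&0\\0&0\end{smallmatrix}\right)$ in place of the real SVD is equally valid here and arguably more economical, since the lemma itself, unlike the corollary that follows it in the paper, does not require the new bases to be orthonormal; the SVD only becomes necessary when one wants orthonormal bases and well-defined coupling coefficients $\omega_i$.
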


\begin{proof}
This holds for arbitrary tensor products, cf.~\cite[Thm.~14.7]{Roman05}.
Alternatively this easily follows from the real singular value decomposition and elementary computations.
\end{proof}

\begin{corollary}
Moreover there exist orthonormal bases $A_i$ and $B_j$ (with respect to any given inner product) such that 
$$
H_0=\sum_{i=1}^r \omega_i A_i\otimes B_i,
$$
and we will again say that this representation is ``in diagonal form''. 
The $\omega_i$ are the singular values of the coefficient matrix with respect to any orthonormal basis, and hence they are uniquely defined up to order and sign.
\end{corollary}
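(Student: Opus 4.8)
The plan is to reduce the statement to the real singular value decomposition applied to the coefficient matrix, using the change-of-basis rule already established in Lemma~\ref{lemma:normal-form}. First I would fix an arbitrary orthonormal basis $\{A_i^0\}_{i=1}^{d_1^2}$ of $\iu\mf u(d_1)$ and $\{B_j^0\}_{j=1}^{d_2^2}$ of $\iu\mf u(d_2)$ with respect to the given inner products. By Lemma~\ref{lemma:normal-form} there is a unique real coefficient matrix $C^0$ with $H_0=\sum_{i,j}C^0_{ij}\,A_i^0\otimes B_j^0$, the reality being automatic since both $H_0$ and the basis elements are Hermitian.

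Next I would invoke the real singular value decomposition to write $C^0=P\,\Omega\,Q^\top$ with $P,Q$ orthogonal and $\Omega$ rectangular-diagonal carrying the singular values $\omega_1,\ldots,\omega_r$ followed by zeros. Setting $A_k:=\sum_i P_{ik}A_i^0$ and $B_l:=\sum_j Q_{jl}B_j^0$ produces new bases, and because $P$ and $Q$ are orthogonal while the original bases are orthonormal, these remain orthonormal. Inverting these relations gives $A_i^0=\sum_k P_{ik}A_k$ and $B_j^0=\sum_l Q_{jl}B_l$, so that in the notation of Lemma~\ref{lemma:normal-form} one has $S=P$ and $T=Q$, whence the new coefficient matrix is $\tilde C=S^\top C^0 T=P^\top C^0 Q=\Omega$. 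Since $\Omega$ is diagonal this yields $H_0=\sum_{i=1}^r\omega_i\,A_i\otimes B_i$, as claimed.

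For the uniqueness statement, the key observation is that any change between two orthonormal bases is implemented by an orthogonal matrix. Hence if $C$ and $\tilde C$ are the coefficient matrices of $H_0$ with respect to two orthonormal choices, then $\tilde C=S^\top C\,T$ with $S,T$ orthogonal, and since multiplication by orthogonal matrices on either side leaves the singular values invariant, the $\omega_i$ depend only on $H_0$. They are therefore well-defined up to reordering; the sign ambiguity reflects the freedom to replace $A_i\mapsto-A_i$ with the compensating change $\omega_i\mapsto-\omega_i$, which is exactly the standard nonuniqueness of the SVD.

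The only genuinely delicate point is the claim that a change of orthonormal basis is orthogonal, which underpins both existence (keeping the new bases orthonormal) and uniqueness (invariance of the singular values); everything else is a direct translation of the SVD through the already-proven transformation rule. I would justify it explicitly: if $\{A_i\}$ and $\{\tilde A_k\}$ are orthonormal and $A_i=\sum_k S_{ik}\tilde A_k$ with $S$ real, then $\delta_{ij}=\langle A_i,A_j\rangle=\sum_{k,l}S_{ik}S_{jl}\langle\tilde A_k,\tilde A_l\rangle=(SS^\top)_{ij}$, so $SS^\top=\id$ and $S$ is orthogonal, and likewise for $T$.
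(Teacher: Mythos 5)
Your proof is correct and follows essentially the same route as the paper, whose proof is simply the one-line remark that the claim ``follows from Lemma~\ref{lemma:normal-form} and the real singular value decomposition.'' Your write-up fills in exactly the details that one-liner leaves implicit (orthogonality of the change-of-basis matrices, the identification $S=P$, $T=Q$, and orthogonal invariance of singular values for uniqueness), all correctly.
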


\begin{proof}
This follows from Lemma~\ref{lemma:normal-form} and the real singular value decomposition.
\end{proof}

\subsection{Indistinguishable case}

We obtain an analogous result for the case of two indistinguishable subsystems.

\begin{lemma} \label{lemma:normal-form-sym}
Let $H_0\in\iu\mf{u}^s(d^2)$ be an arbitrary coupling Hamiltonian in the indistinguishable case and let $A_i$ for $i=1,\ldots,d^2$ be an orthonormal basis of $\iu\mf{u}(d)$.
Then there is a unique coefficient matrix $C\in\R^{d,d}$ such that
$$
H_0=\sum_{i,j=1}^{d^2} C_{ij} A_i\otimes A_j.
$$
It holds that $C$ is symmetric. 
If $A_i=\sum_{k=1}^{d_1^2}S_{ik}\tilde A_k$ is another choice of basis, with $S\in\R^{d_1,d_1}$ invertible, then 
$$
H_0=\sum_{k,l=1}^{d_1^2,d_2^2} \tilde C_{kl} \tilde A_k\otimes \tilde A_l \quad \text{where } \tilde C=S^\top C\,S.
$$
As a consequence the rank $r$ of the coefficient matrix is well-defined, i.e.\ it depends only on $H_0$.
Hence the basis $A_i$ can always be chosen such that
$$
H_0=\sum_{i=1}^r A_i\otimes A_i,
$$
and we will say that this representation is ``in diagonal form''.
\end{lemma}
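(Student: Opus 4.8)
The plan is to reduce almost everything to the distinguishable case already settled in Lemma~\ref{lemma:normal-form}, and then supply the two genuinely new ingredients: the symmetry of $C$ and the single-basis diagonal form. Specializing Lemma~\ref{lemma:normal-form} to $d_1=d_2=d$ with the common basis $A_i=B_i$ immediately yields the existence and uniqueness of $C$, and setting $T=S$ in its transformation rule gives exactly $\tilde C=S^\top C\,S$ together with the well-definedness of $r=\rank(C)$. So the real work is the symmetry and the diagonal form.

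For the symmetry I would use the defining feature of the indistinguishable case: $H_0$ commutes with the swap operator $\mathbb{S}$, which satisfies $\mathbb{S}^2=\id$ and $\mathbb{S}(X\otimes Y)\mathbb{S}=Y\otimes X$. Conjugating the expansion by $\mathbb{S}$ gives
$$
H_0=\mathbb{S}H_0\mathbb{S}=\sum_{i,j}C_{ij}\,A_j\otimes A_i=\sum_{i,j}C_{ji}\,A_i\otimes A_j,
$$
and comparing with the original expansion, the uniqueness of the coefficient matrix forces $C_{ij}=C_{ji}$, i.e.\ $C=C^\top$.

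For the diagonal form I would invoke the spectral theorem for the real symmetric matrix $C$: there is an orthogonal $O\in\mathrm{O}(d^2)$ with $O^\top C\,O=\Lambda=\diag(\lambda_1,\dots,\lambda_{d^2})$. Taking $S=O$ in the transformation rule, and noting that the new basis $\tilde A_k=\sum_i O_{ik}A_i$ is again orthonormal since $O$ is orthogonal, yields $H_0=\sum_k\lambda_k\,\tilde A_k\otimes\tilde A_k$; because orthogonal congruence is also a similarity, the number of nonzero $\lambda_k$ equals $\rank(C)=r$. It then remains to absorb the coefficients: rescaling $\tilde A_k\mapsto\sqrt{\lambda_k}\,\tilde A_k$ handles the positive eigenvalues, while for $\lambda_k<0$ one uses the anti-Hermitian factor $\sqrt{|\lambda_k|}\,\iu\,\tilde A_k$, exploiting that $(\iu\tilde A_k)\otimes(\iu\tilde A_k)=-\,\tilde A_k\otimes\tilde A_k$ is still Hermitian; dropping the indices with $\lambda_k=0$ leaves $H_0=\sum_{i=1}^r A_i\otimes A_i$.

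The main obstacle is precisely this last sign bookkeeping. Unlike the distinguishable case, where the two factors carry independent bases and the SVD supplies \emph{nonnegative} singular values that are freely absorbed by real rescaling, the single common basis here forces a real \emph{congruence} $S^\top C\,S$, whose canonical form (Sylvester's law of inertia) retains a genuine signature $\diag(I_p,-I_q,0)$ that no real, Hermiticity-preserving rescaling can remove. I would therefore fix the intended meaning of ``diagonal form'' so that it accommodates the negative part — either by permitting anti-Hermitian factors as above, or equivalently by passing to the complex Autonne--Takagi factorization $C=U\Sigma U^\top$ with $\Sigma\ge0$ and $U$ unitary, which is the natural bosonic analogue of the SVD. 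Everything else is routine linear algebra.
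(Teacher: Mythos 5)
Your proposal is correct, and its core tool is the same as the paper's: the paper's entire proof reads ``This follows from the real symmetric eigenvalue decomposition,'' which is precisely your spectral-theorem step. What you add on top is genuinely useful: the reduction of existence, uniqueness and the congruence rule $\tilde C=S^\top C S$ to Lemma~\ref{lemma:normal-form} (specialize $d_1=d_2=d$, $B_i=A_i$, $T=S$), and an actual proof that $C$ is symmetric via conjugation by the swap operator, which the paper asserts without argument.

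More importantly, your ``obstacle'' paragraph pinpoints a real defect in the statement that the paper's one-line proof glosses over. Within the lemma's own framework the admissible basis changes are real invertible matrices $S$, so by Sylvester's law of inertia $C$ can only be brought to the form $\diag(I_p,-I_q,0)$; the sign-free expression $H_0=\sum_{i=1}^r A_i\otimes A_i$ with all $A_i$ Hermitian therefore exists if and only if $C$ is positive semidefinite. A concrete counterexample: $H_0=-P_x\otimes P_x$ is swap-symmetric of rank $r=1$, yet $H_0=A\otimes A$ with $A$ Hermitian would force $-4=\tr\bigl((A\otimes A)(P_x\otimes P_x)\bigr)=(\tr(AP_x))^2\geq0$, a contradiction. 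So the lemma must either retain the signs --- equivalently the eigenvalues $\omega_i$, exactly as the corollary following it in the paper does --- or admit anti-Hermitian factors $\iu\tilde A_k$ as you propose; note that your alternative Autonne--Takagi repair likewise produces non-Hermitian (complex) combinations of the $A_i$, so either fix leaves the category of bases of $\iu\mf{u}(d)$. This imprecision also propagates to Section~\ref{sec:fermionic-qu4its}, where the rank-one case is written as $H=A\otimes A$ with $A$ Hermitian and the case $-A\otimes A$ is silently dropped (harmless there, since a global sign only reverses the rotation direction in the reduced system, but worth flagging). In short: your proof is right, follows the paper's route where the paper is right, and correctly amends it where the paper is too quick.
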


\begin{proof}
This follows from the real symmetric eigenvalue decomposition.
\end{proof}

\begin{corollary}
Moreover there exists an orthonormal basis $A_i$ (with respect to any given inner product) such that 
$$
H_0=\sum_{i=1}^{d^2-1} \omega_i A_i\otimes A_i,
$$
and we will say that this representation is ``in diagonal form''.
The $\omega_i$ are the eigenvalues of the coefficient matrix with respect to any orthonormal basis, and hence they are uniquely defined up to order.
\end{corollary}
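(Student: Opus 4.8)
The plan is to deduce the statement directly from Lemma~\ref{lemma:normal-form-sym} together with the spectral theorem for real symmetric matrices. Lemma~\ref{lemma:normal-form-sym} already provides, for any fixed orthonormal basis $A_i$ of $\iu\mf{u}(d)$ (or of the traceless subspace $\iu\mf{su}(d)$ as in Remark~\ref{rmk:local}, which accounts for there being $d^2-1$ terms), a \emph{symmetric} coefficient matrix $C$ with $H_0=\sum_{i,j}C_{ij}A_i\otimes A_j$, along with the transformation rule $\tilde C=S^\top C\,S$ under a change of basis $A_i=\sum_k S_{ik}\tilde A_k$. The only new ingredient is to diagonalize $C$ while remaining within the class of \emph{orthonormal} bases.

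First I would record that a transition matrix $S$ between two orthonormal bases is necessarily orthogonal. Expanding the Gram matrix of the $A_i$ in terms of the $\tilde A_k$ via the (real) inner product gives $\langle A_i,A_j\rangle=(SS^\top)_{ij}$, so orthonormality on both sides forces $SS^\top=\id$. Consequently, for such $S$ the transformation rule specializes to $\tilde C=S^\top C\,S=S^{-1}C\,S$, i.e.\ an orthogonal conjugation of $C$ rather than a mere congruence.

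Next I would invoke the real symmetric eigenvalue decomposition: since $C$ is symmetric, there exists an orthogonal $S$ with $S^\top C\,S=\diag(\omega_1,\ldots,\omega_{d^2-1})$, where the $\omega_i$ are the eigenvalues of $C$. By the previous paragraph the basis $\tilde A_k$ produced by this $S$ is again orthonormal, and in it $H_0=\sum_i\omega_i\,\tilde A_i\otimes\tilde A_i$, which is precisely the claimed diagonal form. For the well-definedness assertion I would then observe that any two orthonormal bases yield coefficient matrices related by orthogonal conjugation, and orthogonal conjugation preserves the spectrum exactly (signs included); hence the $\omega_i$ depend only on $H_0$ and are determined up to a permutation — but, in contrast to the distinguishable singular-value corollary, \emph{not} up to sign, since a similarity transformation cannot flip an eigenvalue.

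I do not anticipate a genuine obstacle: the entire content is carried by the spectral theorem, and the argument runs in exact parallel to the distinguishable corollary, with the eigenvalue decomposition replacing the singular value decomposition. The one point deserving a moment's care is the orthogonality of the transition matrix, as this is exactly what upgrades the abstract congruence $\tilde C=S^\top C\,S$ to a similarity transformation and thereby guarantees invariance of the eigenvalues.
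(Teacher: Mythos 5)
Your proposal is correct and follows exactly the route the paper intends: the paper leaves this corollary's proof implicit (mirroring the distinguishable case, whose proof is just ``follows from the lemma and the real singular value decomposition''), namely Lemma~\ref{lemma:normal-form-sym} plus the real symmetric eigenvalue decomposition. Your write-up supplies precisely the details the paper omits --- that the transition matrix between orthonormal bases is orthogonal, so the congruence $\tilde C=S^\top C\,S$ is a similarity, making the eigenvalues basis-independent (up to order but not sign) --- and correctly accounts for the $d^2-1$ terms via the traceless reduction of Remark~\ref{rmk:local}.
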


\begin{remark}
The results of Lemma~\ref{lemma:normal-form} and Lemma~\ref{lemma:normal-form-sym} are in many ways analogous to the normal form results for the Lindblad equation. 
In particular our coefficient matrix $C$ plays the role of the Kossakowski matrix, and by diagonalizing it we obtain a particularly nice form. 
Whereas the eigenvalues of the Kossakowski matrix represent exponential decay rates, the singular values and eigenvalues of our coefficient matrices represent coupling frequencies.
\end{remark}

\printbibliography[heading=bibintoc]

\end{document}